\renewcommand{\paragraph}[1]{\noindent{\bf #1.}}
\newenvironment{code}{\sf\vspace*{-12pt}}{\vspace*{-2pt}}
\def\qed {{                % set up
   \parfillskip=0pt        % so \par doesnt push \square to left
   \widowpenalty=10000     % so we dont break the page before \square
   \displaywidowpenalty=10000  % ditto
   \finalhyphendemerits=0  % TeXbook exercise 14.32
                           %
                           % **horizontal**
   \leavevmode             % \nobreak means lines not pages
   \unskip                 % remove previous space or glue
   \nobreak                % don't break lines
   \hfil                   % ragged right if we spill over
   \penalty50              % discouragement to do so
   \hskip.2em              % ensure some space
   \null                   % anchor following \hfill
   \hfill                  % push \square to right
   $\square$%              % the end-of-proof mark
                           %
                           % **vertical**
   \par}}                  % build paragraph
\newcommand{\Prop}{\ensuremath{\mathsf{Prop}\xspace}}
\newcommand{\IRI}{\ensuremath{\mathsf{IRI}\xspace}}
\newcommand{\Inv}[1]{\ensuremath{\text{\sf{}\^{}}#1}}
\newcommand{\InvProp}{\Inv{\Prop}}
\newcommand{\prop}{\ensuremath{\mathsf{prop}\xspace}}
\newcommand{\node}{\ensuremath{\mathsf{node}\xspace}}
\newcommand{\Nodes}{\ensuremath{\mathsf{Nodes}\xspace}}
\newcommand{\Edges}{\ensuremath{\mathsf{Edges}\xspace}}
\newcommand{\val}{\ensuremath{\mathsf{val}\xspace}}
\newcommand{\blank}{\ensuremath{\mathsf{\_b}\xspace}}
\newcommand{\neigh}{\ensuremath{\mathsf{neigh}\xspace}}
\newcommand{\AND}{\ensuremath{\mathbin{\mathsf{AND}}\xspace}}
\newcommand{\ShapeExpr}{\ensuremath{\mathsf{ShapeExpr}}\xspace}
\newcommand{\EmptyShape}{\ensuremath{\mathsf{EmptyShape}}\xspace}
\newcommand{\TripleConstraint}{\ensuremath{\mathsf{TripleConstraint}}\xspace}
\newcommand{\Property}{\ensuremath{\mathsf{Property}}\xspace}
\newcommand{\InvProperty}{\ensuremath{\mathsf{InvProperty}}\xspace}
\newcommand{\ValueClass}{\ensuremath{\mathsf{ValueClass}}\xspace}
\newcommand{\SomeOfShape}{\ensuremath{\mathsf{SomeOfShape}}\xspace}
\newcommand{\GroupShape}{\ensuremath{\mathsf{GroupShape}}\xspace} 
\newcommand{\RepetitionShape}{\ensuremath{\mathsf{RepetitionShape}}\xspace} 
\newcommand{\Cardinality}{\ensuremath{\mathsf{Cardinality}}\xspace}
\newcommand{\MinCardinality}{\ensuremath{\mathsf{MinCardinality}}\xspace}
\newcommand{\MaxCardinality}{\ensuremath{\mathsf{MaxCardinality}}\xspace}
\newcommand{\unbound}{\ensuremath{\mathsf{unbound}}\xspace} 
\newcommand{\ShapeExpressionSchema}{\ensuremath{\mathsf{ShapeExprSchema}}\xspace}
\newcommand{\ValueSet}{\ensuremath{\mathsf{ValueSet}}\xspace} 
\newcommand{\ShapeConstr}{\ensuremath{\mathsf{ShapeConstr}}\xspace}
\newcommand{\ShapeLabel}{\ensuremath{\mathsf{ShapeLabel}}\xspace}
\newcommand{\AtomicConstr}{\ensuremath{\mathsf{AtomicConstr}}\xspace}
\newcommand{\ExtraPropSet}{\ensuremath{\mathsf{ExtraPropSet}}\xspace}
\newcommand{\ShapeDefinition}{\ensuremath{\mathsf{ShapeDefinition}}\xspace}
\newcommand{\CLOSED}{\ensuremath{\mathsf{CLOSED}}\xspace}
\newcommand{\definition}{\ensuremath{\mathsf{definition}}\xspace}
\newcommand{\expr}{\ensuremath{\mathsf{expr}}\xspace}
\newcommand{\Shapes}{\ensuremath{\mathsf{Shapes}}\xspace}
\newcommand{\negatedShapes}{\ensuremath{\mathsf{negated\text{\sf{}-}shapes}}\xspace}
\newcommand{\TCons}{\ensuremath{\mathsf{TCons}}\xspace}
\newcommand{\extra}{\ensuremath{\mathsf{extra}}\xspace}
\newcommand{\open}{\ensuremath{\mathsf{open}}\xspace}
\newcommand{\ShDef}{\ensuremath{\mathsf{ShDef}}\xspace}
\newcommand{\Consumers}{\ensuremath{\mathsf{Consumers}}\xspace}
\newcommand{\witness}{\ensuremath{\mathsf{witness}}\xspace}
\newcommand{\Neigh}{\ensuremath{\mathsf{Neigh}}\xspace}
\newcommand{\edge}{\ensuremath{\mathsf{edge}}\xspace}
\newcommand{\Expr}{\ensuremath{\mathsf{Expr}}\xspace}
\newcommand{\NegatedShapes}{\ensuremath{\mathsf{NegatedShapes}}\xspace}
\newcommand{\Neg}[1]{\ensuremath{\mathsf{!}#1}}
\newcommand{\typing}{\ensuremath{\mathsf{typing}}\xspace}
\newcommand{\TUC}{\mathsf{TUC}}
\newcommand{\lw}{\mathsf{lw}}
\newcommand{\requires}{\mathsf{requires}}
\newcommand{\ULW}{\mathsf{UnchLW}}
\newcommand{\lwcert}{\mathsf{lw}^{\mathsf{cert}}}
\newcommand{\typingcert}{\mathsf{typing}^{\mathsf{cert}}}
\newcommand{\toCheck}{\mathsf{toCheck}}
\newcommand{\typinghyp}{\typing^{\mathsf{hyp}}}
\newcommand{\lwhyp}{\lw^{\mathsf{hyp}}}
\newcommand{\ml}{\mathopen{\{\hspace{-0.2em}|}}
\newcommand{\mr}{\mathclose{|\hspace{-0.2em}\}}}
\newtheorem{theorem}{Theorem}[section]
\newtheorem{lemma}[theorem]{Lemma}
\newtheorem{corollary}[theorem]{Corollary}
\newtheorem{example}[theorem]{Example}
\newtheorem{DEF}{Definition}
\begin{document}

\title{Shape Expressions Schemas}

\numberofauthors{4} 
\author{
\alignauthor
Iovka Boneva\\
\affaddr{Univ. Lille - CRIStAL - F-59000 Lille, France}
\email{iovka.boneva@univ-lille1.fr}
\alignauthor
Jose E. Labra Gayo\\
\affaddr{University of Oviedo, Spain}\\
\email{labra@uniovi.es}\\
\and
\alignauthor
Eric G. Prud'hommeaux\\
\affaddr{W3C}\\
\affaddr{Stata Center, MIT}
\email{eric@w3.org}
\alignauthor
S\l{}awek Staworko\\
\affaddr{Univ. Lille - CRIStAL - F-59000 Lille, France}\\
\affaddr{University of Edinburgh, UK}\\
{\email{slawomir.staworko@inria.fr}}
}
\maketitle

\begin{abstract}

We present Shape Expressions (ShEx), an expressive schema language for RDF designed to provide a high-level, user friendly syntax with intuitive semantics.
ShEx allows to describe the vocabulary and the structure of an RDF graph, and to constrain the allowed values for the properties of a node.
It includes an algebraic grouping operator, a choice operator, cardinalitiy constraints for the number of allowed occurrences of a property, and negation.
We define the semantics of the language and illustrate it with examples.
We then present a validation algorithm that, given a node in an RDF graph and a constraint defined by the ShEx schema, allows to check whether the node satisfies that constraint.
The algorithm outputs a proof that contains  trivially verifiable associations of nodes and the constraints that they satisfy.
The structure can be used for complex post-processing tasks, such as transforming the RDF graph to other graph or tree structures, verifying more complex constraints, or debugging (w.r.t. the schema).
We also show the inherent difficulty of error identification of ShEx.

\end{abstract}

\section{Introduction}
\label{sec:intro}

RDF's distributed graph model encouraged adoption for publication and manipulation of e.g. social and biological data.
Coding errors in data stores like DBpedia have largely been handled in a piecemeal fashion with no formal mechanism for detecting or describing schema violations.
Extending uptake into environments like medicine, business and banking requires structural validation analogous to what is available in relational or XML schemas.

While OWL ontologies can be used for limited validation, they are generally used for formal models of reusable classes and predicates describing objects in some domain.
Applications typically consume and produce graphs composed of precise compositions of such ontologies.
A company's human resources records may leverage terms from FOAF and Dublin Core, but only certain terms, and composed in specific structures.
We would no more want to impose the constraints of a single human resources application suite on FOAF and Dublin Core than we would want to assert that such applications need to consume all ontologically valid permutations of FOAF and Dublin Core entities.
Further, open-world constraints on OWL ontologies make it impossible to use conventional OWL tools to e.g. detect missing properties.
Shape expressions define structural constraints (arc labels, cardinalities, datatypes, etc.) to provide a schema language in which it is easy to mix terms from arbitrary ontologies.

A schema language for any data format has several uses:
communicating to humans and machines the "shape" of input/output data;
enabling machine-verification of data on production, publication, or consumption;
driving query and input interfaces;  static analysis of queries.
In this, ShEx provides a similar role as relational and XML schemas.
Declarative switches in ShEx schemas (c.f. $\CLOSED$ and {\sf EXTRA} defined below) make is useful for tightly controlled environments intolerant of extra assertions as well as linked open data media which promises a core data structure peppered with arbitrary extra triples.

ShEx validates nodes in a graph against a schema construct called a \emph{shape}.
In XML, validating an element against an XML Schema \cite{xml-schema} type or element or Relax NG \cite{relaxng} production recursively tests nested elements against constituent rules.
In ShEx, validating a \emph{focus node} in a graph against a shape recursively tests the nodes which are the subjet or object of triples constrained in that shape.

ShEx was designed to provide a sound and coherent language without variables.
The compact syntax and the JSON syntax are intended to enable trivial authoring and parsing by people and machines.
The core language's balance between expressivity and complexity is supplemented by an extension mechanism which enables more expressive semantic actions, acting like Schematron~\cite{schematron} rules embedded in XML Schema or Relax NG.

\paragraph{Previous Work and Contributions}
In \cite{semantics2014} we presented a first version of shape expressions, in which we used a conjunction operator instead of grouping, and recursion was not discussed.
In \cite{icdt2015} we studied the complexity of validation of ShEx in absence of negation, and for closed shape definitions only. 
We showed that in general the complexity is NP-complete, identified tractable fragments, and proposed a validation algorithm for a restriction called deterministic single-occurrence shape expressions.
In the current paper, we present an enhanced version of ShEx schemas, including the $\CLOSED$ and {\sf EXTRA} modifiers, negation and well-defined recursion, value sets, and conjunction in value class constraints. We present the semantics of ShEx (Section~\ref{sec:formal}) independently on regular bag expressions \cite{icdt2015}, which we believe makes it easier to understand by a larger community. We also present a full validation algorithm, as well as guidelines for its efficient implementation (Section~\ref{sec:validation}). We finally show that error identification for ShEx is a hard problem, even if only the basic constructs of the language are used (Section~\ref{sec:errors}). ShEx has several open-source implementations\footnote{Implementations: \url{http://shex.io/\#platforms/};.} not discussed here because of space constraints, two of which can be tested online.%
\footnote{Demo: \url{http://rdfshape.herokuapp.com/}.}%
\footnote{Demo: \url{http://www.w3.org/2013/ShEx/FancyShExDemo}.}

\section{Primer}

The following examples illustrate several features of shape
expressions. Let {\sf is:} be a namespace prefix for a widely-used issue tracking ontology, and {\sf foaf:} and {\sf
  xsd:} are the standard FOAF and XSD prefixes, respectively. {\sf ex:}
is the namespace prefix for some example instance data which we test against our example schema. ShEx
uses the same conventions as Turtle~\cite{TurtleRDF} with relative and absolute IRIs
enclosed in {\sf <} {\sf >} and prefixed names. %\footnote{\url{http://www.w3.org/TR/turtle/\#prefixed-name}.}
as a shorthand notation for IRIs. As a convention in this primer, we
will use relative IRIs as shape identifiers for our application schema.

\paragraph{Running Example}
We give an example of a ShEx
schema for data manipulated by a bug tracker. It describes five node shapes of
interest. A shape describes constraints on the graph edges touching a
particular node, called the \emph{focus node}. The shape {\sf
  <TesterShape>} describes the constraints for a node that represents
a tester. It contains two \emph{triple constraints}, that associate a property with a required value. 
A tester node must have one {\sf foaf:name} property the
object of which is a literal string value, and one {\sf is:role}
property the object of which is an IRI. The two triple constraints are \emph{grouped}, using the comma (,) operator, indicating that the node must have triples that satisfy all parts of the grouping.

\begin{code}
  \begin {tabbing}
    \hspace{0.6cm} \= \hspace{0.4cm} \=\\
    <TesterShape> \{\\
    \> foaf:name xsd:string, is:role IRI \}
  \end {tabbing}
\end{code}

The shape {\sf <ProgrammerShape>} requires that a node has one {\sf
  foaf:name} that is a string, one {\sf is:experience} property which
value is one among {\sf is:senior} and {\sf is:junior}. 

\vspace{-0.1cm}
\begin{code}
  \begin {tabbing}
    \hspace{0.6cm} \= \hspace{0.4cm} \=\\
    <ProgrammerShape> \{\\
    \> foaf:name xsd:string,\\
    \> is:experience (is:senior is:junior) \}
  \end {tabbing}
\end{code}

A user ({\sf <UserShape>}) has either a {\sf foaf:givenName} and a
{\sf foaf:lastName}, or a single {\sf foaf:name}, and all these are
strings. The choice between the two possibilities is expressed using
the \emph{some-of} operator, written as a vertical bar {\sf
  |}. Moreover, a user has zero or more {\sf is:affectedBy} properties which value satisfies {\sf <IssueShape>}. The
\emph{repetition constraint} "zero or more" is expressed by the {\sf
  *} symbol.
A {\sf <ClientShape>} is required to have only a {\sf is:clientNumber} that is an
integer.

\vspace{-0.5cm}
\begin{code} 
 \begin {tabbing}
    \hspace{0.3cm} \= \hspace{0.3cm} \= \hspace{7,2cm} \= \\
    <UserShape> \{\\
    \> ( (foaf:givenName xsd:string, foaf:lastName xsd:string)\\
    \>\> | foaf:name xsd:string ) , \\
    \> is:affectedBy @<IssueShape> *\}\\
    <ClientShape> \{\\
    \> is:clientNumber xsd:integer \}
  \end{tabbing}
\end{code}

Finally, we describe {\sf <IssueShape>}. 
The shape {\sf <IssueShape>} specifies that the node needs to
have one {\sf is:reportedBy} property, whose objects satisfies both
{\sf <UserShape>} and {\sf <Client\-Shape>}. Also, an issue
node has to be {\sf is:reproducedBy} one tester and {\sf
  is:reproducedBy} one or more programmers. The "one or more" is written using the + sign. Additionally, an 
issue must have one or more \emph{inverse} (i.e. incoming) {\sf is:affectedBy}
arc whose subject is a user; the incoming property requirement is expressed by the hat ($\Inv$) preceding the
property. The annotations in comments (after the \#) on the right of {\sf <IssueShape>} will be used later on. The {\sf EXTRA} modifier is explained below.

\begin{code} 
 \begin {tabbing}
    \hspace{0.3cm} \= \hspace{0.3cm} \= \hspace{7,2cm} \= \\
    % <UserShape> \{\\
    % \> ( (foaf:givenName xsd:string, foaf:lastName xsd:string)\\
    % \>\> | foaf:name xsd:string ) , \\
    % \> is:affectedBy <IssueShape> *\}\\
    % <ClientShape> \{\\
    % \> is:clientNumber xsd:integer \}\\
    <IssueShape> \\
    \> EXTRA is:reproducedBy \{ \>\> \#$\mathsf{TCons}_0$\\
    \> is:reportedBy @<UserShape> AND @<ClientShape>,  \>\> \#$\mathsf{C}_1$\\
    \> is:reproducedBy @<TesterShape> , \>\> \#$\mathsf{C}_2$\\
    \> is:reproducedBy @<ProgrammerShape> +, \>\> \#$\mathsf{C}_3$\\
    \> $\Inv$is:affectedBy @<UserShape> + \} \>\> \#$\mathsf{C}_4$
  \end {tabbing}
\end{code}

Here is a portion of RDF data conforming to the ShEx schema. On
the right of every subject node we list the shapes satisfied by that
node, and that allow to witness that {\sf ex:issue1} and {\sf ex:issue2}
satisfy {\sf <IssueShape>}. The $\edge_j$ annotations will be used later on.

\begin{code}
 \begin {tabbing}
    \hspace{0.3cm} \= \hspace{4,1cm} \= \hspace{0.5cm} \= \hspace{0.2cm} \= \hspace{2cm} \= \\ 
    ex:issue1  \>\>\>\> \#<IssueShape> \\
    \> is:reportedBy ex:fatima ; \>\>\>\> \#$\edge_1$ \\
    \> is:reproducedBy ex:ren ; \>\>\>\> \#$\edge_2$ \\
    \> is:reproducedBy ex:noa ; \>\>\>\> \#$\edge_3$ \\
    \> is:reproducedBy ex:emin ; \>\>\>\> \#$\edge_4$\\
    \> is:dueDate "15/12/2015"\^{}xsd:date . \>\>\>\> \#$\edge_5$ \\
    ex:issue2 is:reportedBy ex:emin ;  \>\>\>\> \#<IssueShape> \\
    \> is:reproducedBy ex:ren, ex:noa, ex:shristi . \\
    ex:ren foaf:name "Ren Traore" ; \>\>\> \#<TesterShape> \\
    \> is:role is:integration ; \\
    \> is:affectedBy ex:issue2 . \\
    ex:noa foaf:name "Noa Salma" ; \>\>\> \#<ProgrammerShape> \\
    \> is:experience is:senior ; \\
    \> foaf:mbox "noa@mail.com" . \\
    ex:shristi foaf:name "Shristi Li" ; \>\>\> \#<ProgrammerShape>  \\
    \> is:experience is:junior . \\
    ex:fatima is:clientNumber 1 ;  \>\> \#<UserShape><ClientShape> \\
    \> foaf:givenName "Fatima"; \\
    \> foaf:lastName "Smith" . \\
    ex:emin is:clientNumber 2 ; \>\> \#<UserShape><ClientShape> \\
    \> foaf:name "Emin V. Petrov" ; \\
    \> is:affectedBy ex:issue1 .  \>\>\>\> \#$\edge_6$
  \end{tabbing}
\end{code}

By default, the properties that are mentioned in the shape definition
are allowed to appear only conforming to the corresponding
constraints. For instance, a programmer cannot have a {\sf
  is:experience} property whose value is not one among {\sf is:senior}
and {\sf is:junior}. This default behaviour can be changed using the
{\sf EXTRA} modifier, that takes as parameter a list of
properties. For all properties declared {\sf EXTRA}, triples additional to those satisfying the triple constraints are allowed in any number, provided that their object does not satisfy the
constraints being mentioned. {\sf <IssueShape>} has the {\sf EXTRA
  is:reportedBy} modifier, which allows for {\sf is:reportedBy}
properties whose value is \emph{neither a tester nor a
  programmer}. Without the {\sf EXTRA} modifier, the triple {\sf
  ex:issue1 is:reproducedBy ex:emin} would have prevented {\sf
  ex:issue1} from satisfying {\sf <IssueShape>}. On the other hand,
even in presence of {\sf EXTRA}, an issue shape can still be {\sf
  is:reproducedBy} only one tester.

Regarding the properties not mentioned in the shape definition, they
are by default in any number and with unconstrained values.  For
instance, a programmer can have a {\sf foaf:mbox}
property, and an issue can have a {\sf is:dueDate}.
This default behaviour can be tuned too: the $\CLOSED$ modifier
forbids all properties that are not mentioned in the shape definition,
and the $\Inv\CLOSED$ modifier forbids all inverse properties not
mentioned.  The modifiers $\CLOSED$ and {\sf EXTRA} can be combined,
in order to forbid all non mentioned properties, except for those that
are arguments of {\sf EXTRA}.

ShEx schemas also include a negation operator $\Neg$. For instance, one
could define an issue having low impact as an issue that is possibly
reproduced and reported several times, but not by a client.

\begin{code}
  \begin {tabbing}
    \hspace{0.6cm} \= \hspace{0.4cm} \=\\
    <LowImpactIssueShape> \{\\
    \> is:reportedBy \Neg{}@<ClientShape> *, \\
    \> is:reproducedBy \Neg{}@<ClientShape> * \}
  \end{tabbing}
\end{code}
The Running Example can be tested online\footnote{Demo of the Running Example: \url{http://goo.gl/EhlksQ}.}

\paragraph{Repeated Properties}
Much of the design of ShEx arises from meeting use cases with multiple triple constraints on the same property (or \emph{Repeated Properties}) as seen above in {\sf <IssueShape>} which requires {\sf is:re\-pro\-ducedBy} arcs for at least one tester and at least one programmer.
These are quite common in clinical informatics where generic properties relate observations in specific ways, but in fact emerge any time a schema or convention re-purposes generic predicates.
Because RDF is a graph, the same node may be involved in the validation of multiple triples where that node matches multiple shapes.
For instance, consider a small modification of the Running Example, by adding the triple
$$
\textsf{ex:shristi is:role is:integration .} 
$$
Then {\sf ex:shristi} satisfies both {\sf <ProgrammerShape>} and {\sf <Tes\-terShape>}. 
Validating {\sf ex:issue2} as a {\sf <IssueShape>} requires {\sf ex:shristi} to be seen as a programmer, as {\sf ex:ren} satisfies only {\sf <TesterShape>}.
A less nuanced interpretation would treat all repeated properties in a shape as a {\emph{conjunction of constraints}, in which case {\sf ex:issue2} would not match {\sf <IssueShape>} because {\sf ex:ren} would have to match both {\sf <ProgrammerShape>} and {\sf <TesterShape>}.

\paragraph{Extension Mechanism and JSON Syntax}
ShEx schemas have two additional features that we do not
present here because of space constraints. 
The \emph{extension mechanism} presented also in \cite{semantics2014},
allows to decorate schemas with executable code, called \emph{semantic
  actions}\footnote{Extension mechanism: \url{http://shex.io/primer/\#semact}.},
which are useful for more expressive validation.
For instance, a semantic action might be used to check value ranges or invoke a service to test membership in an external value set:
\begin{code}
  \begin {tabbing}
    \hspace{0.4cm} \= \hspace{0.4cm} \=\\
    <ClientShape> \{\\
    \> is:clientNumber xsd:integer\\
    \> \> \emph{\%ex:memberOf\{ex:OurClients\%\}}\ \}
  \end {tabbing}
\end{code}
Semantic actions can also produce output to 
e.g. construct XML fitting some XML schema.

The examples in this section are all presented with what is called the
\emph{compact syntax} or \emph{ShExC}, that is intended to be processed by
humans. ShEx schemas are easier to programmatically produce and manipulate using the \emph{JSON syntax}
\footnote{JSON syntax: \url{http://shex.io/primer/ShExJ\#schema}} and round-trip translators:
\footnote{to JSON: \url{http://shex.io/tools/shex-to-json/}},\footnote{to ShExC: \url{http://shex.io/tools/json-to-shex/}.}

\section{Syntax and semantics of ShEx}
\label{sec:formal}
\subsection{Abstraction of RDF Graphs}
Shape expression schemas allow to constraint both the incoming and
outgoing edges of a node. In order to handle incoming and outgoing
edges uniformly, we consider an abstraction of RDF
graphs. Let $\Prop$ be a set of properties, that in practice
correspond to the set of $\IRI$, and let $\InvProp$ be the set of
inverse properties. An \emph{inverse property} is simply a property
decorated with the hat $\Inv{}$, that is, if $\prop$ is a property,
then $\Inv{\prop}$ is an inverse property. 

A \emph{graph} is defined by a set of nodes $\Nodes$, a set of edges
$\Edges$, and a value function $\val$. The nodes of the graph are
abstract entities. The value function $\val$ associates, with every
node in $\Nodes$, either an $\IRI$, or a literal value, or a special
value $\blank$ that stands for a blank node. Finally, every edge in
$\Edges$ is a triple of the form $(\node, \mathsf{q}, \node')$, where
$\node$ and $\node'$ are two nodes, and $\mathsf{q}$ is a property or
is an inverse property. The nodes $\node$ and $\node'$ are called the
\emph{source} and the \emph{target} nodes of the edge, respectively.

Given a set of triples defining an RDF graph, the above abstraction is
obtained by:
\begin{itemize}
\item For all $\IRI$ $I$ that appears in the subject or
  object position in some triple, there is $\node_{I}$ in
  $\Nodes$ such that $\val(\node_{I}) = I$.
\item For all blank node $B$ that appears in some triple, there is a
  $\node_B$ in $\Nodes$ such that $\val(\node_B) = \blank$.
\item For all triple $(\mathsf{subj}, \prop, \mathsf{obj})$, the set
  $\Edges$ contains the \emph{forward edge} $(\node_{\mathsf{subj}}, \prop,
  \node_{\mathsf{obj}})$ and the \emph{inverse edge} $(\node_{\mathsf{obj}}, \Inv{\prop},
  \node_{\mathsf{subj}})$.
\end{itemize}

Given a $\node$ in a graph, its \emph{neighbourhood} is the set of its
adjacent edges, that is, the edges of the form $(\node, \prop,
\node')$ and $(\node', \Inv{\prop}, \node)$ that belong to
$\Edges$. We denote this set by $\neigh(\node)$. For instance, on the Running Example, the neighbourhood of the node $\node_{\mathsf{ex:issue1}}$ is composed of the edges (that correspond to the triples) $\edge_1,\ldots,\edge_5$  and the inverse of $\edge_6$. A \emph{set of
  neighbourhood edges} is a subset of the neighbourhood of some node.

\subsection{Syntax of ShEx Schemas}
In the abstract syntax presented here, we omit the curly braces that
enclose shape definitions, and the {\sf @} sign preceding referenced
shape names.  A shape expression schema (ShEx schema) defines a set of
shape labels and their associated shape definitions.
\begin{align*}
  \ShapeExpressionSchema \coloncolonequals &
  (\ShapeLabel\  \ShapeDefinition)^+ \\
  \ShapeDefinition \coloncolonequals{}& \text{\sf{}'$\CLOSED$'}?\ \text{\sf{}'$\Inv{\CLOSED}$\sf{}'}?  \\
  & (\mathsf{EXTRA}\ \ExtraPropSet)?\ \ShapeExpr\\
  \ExtraPropSet \colonequals & \text{a set of properties or inverse properties}
\end{align*}
A shape definition is composed of a shape expression ({\sf ShapeExpr}),
possibly preceded by the optional modifiers $\CLOSED$ (forward closed), $\Inv\CLOSED$ (inverse closed)
or $\mathsf{EXTRA}$, the latter taking as parameter a set of
properties or inverse properties.

A shape expression specifies the actual constraints on the
neighbourhood, and is defined by the following syntax rule:
\begin{align*}
\ShapeExpr \coloncolonequals{} & \EmptyShape \mid{} \TripleConstraint \mid{} \SomeOfShape\\
\mid{} & \GroupShape \mid{} \RepetitionShape  
\end{align*}
The \emph{empty shape} ($\EmptyShape$) imposes no constraints. An empty (forward) $\CLOSED$ shape can only be satisfied by a node without outgoing edges, while an empty shape that is not (forward) $\CLOSED$ can have any outgoing edges.

The basic component of a shape expression is a \emph{triple constraint}.
\begin{align*}
&\TripleConstraint \coloncolonequals (\Property \mid \InvProperty)\ \ValueClass\\
&\Property \coloncolonequals \text{an $\IRI$}\\
&\InvProperty \coloncolonequals \Inv{\Property}
\end{align*}
A \emph{triple constraint} is either a forward constraint of the form
$\prop\ \mathsf{K}$, or an inverse constraint of the form
$\Inv{\prop}\ \mathsf{K}$.%
  A triple constraint is satisfied by a single edge adjacent to the
  focus node which node opposite to the focus node satisfies the
  constraint defined by $\mathsf{K}$, and which label is the property
  of the triple constraint. Examples of triple constraints are {\sf
    foaf:name xsd:string} and {\sf is:re\-por\-tedBy @<UserShape>
    \AND\ @<ClientShape>}. If the focus node is $\node$, then an edge
  $(\node, \textsf{foaf:name}, \node')$ satisfies the triple
  constraint {\sf foaf:name xsd:string} if $\val(\node')$ is a
  string. %
The form of the $\ValueClass$ constraint $\mathsf{K}$ is described
below.

A \emph{some-of shape} expression defines a disjunctive
constraint.
\begin{align*}
  \SomeOfShape \coloncolonequals \ShapeExpr\ (\text{\sf{}'|'}\ \ShapeExpr)^*
\end{align*}
It is composed by one or more sub-expressions, separated by the $|$
sign. A some-of expression is satisfied by a set of neighbourhood
edges if at least one of its sub-expressions is satisfied. An example of a
some-of shape %which sub-expressions are triple constraints 
is {\sf
  foaf:givenName xsd:string | foaf:firstName xsd:string}.

A \emph{group shape} expression is composed by one or more sub-expressions, separated by the
comma sign.
\begin{align*}
\GroupShape \coloncolonequals \ShapeExpr\ (\text{\sf{}','}\ \ShapeExpr)^* 
\end{align*}
It is satisfied by a set of neighbourhood edges if the set
can be split into as many disjoint subsets, and each of these
components satisfies the corresponding sub-expression. An example of a
group shape is {\sf (foaf:givenName xsd:string | foaf:firstName
  xsd:string), foaf:mbox xsd:string}.

Finally, a \emph{repetition shape} expression is composed by an inner
sub-expression, and an allowed number of repetitions specified by a
possibly unbounded interval of natural values. 
\begin{align*}
& \RepetitionShape \coloncolonequals \ShapeExpr\ \Cardinality \\
& \Cardinality \coloncolonequals 
\text{\sf{}'['}\ 
\MinCardinality\  
\text{\sf{}';'}\ 
\MaxCardinality\ 
\text{\sf{}']'}\\
&\MinCardinality \coloncolonequals \text{\sf a natural number}\\
&\MaxCardinality \coloncolonequals 
\text{\sf a natural number} \mid \text{'$\unbound$'}
\end{align*}
Its satisfiability is similar to that of a group expression. A set of
neighbourhood edges satisfies a group expression if it can be split in
$\mathsf{m}$ disjoint subsets, and each of those satisfies the
sub-expression, where $\mathsf{m}$ must belong to the interval of
allowed repetitions. An example of a repetition shape is {\sf
  is:reproducedBy @<ProgrammerShape>+}, where {\sf +} is a short for the
interval $[1,\unbound{}]$.

Hera are the constraints defined by {\sf ValueClass}:
\begin{align*}
&\ValueClass \coloncolonequals \AtomicConstr\  (\text{\sf{}'$\AND$'}\ \AtomicConstr)^* \\
&\AtomicConstr \coloncolonequals \ValueSet \mid{} \ShapeConstr\\
&\ValueSet \coloncolonequals \text{\sf set whose elements are literals, $\IRI$s, or $\blank$}\\ 
&\ShapeConstr \coloncolonequals \ShapeLabel \mid{} \text{\sf{}'\Neg\sf{}'} \ShapeLabel\\
&\ShapeLabel \coloncolonequals \text{\sf an identifier}
\end{align*}
It is a conjunction of sets of values ($\ValueSet$), and of shape
constraints ($\ShapeConstr$). A $\ValueSet$ can contain $\IRI$s,
literal values, or the special constant $\blank$ for a blank node. In
practice, it can be given by listing the values (such as {\sf
  (is:senior is:junior)}), by an XSD value type (such as {\sf
  xsd:string, xsd:integer}), by a node kind specification (for example
IRI, blank, literal, non-literal), by a regular expression defining the
allowed IRIs, etc. A neighbour node of the focus node satisfies a
$\ValueSet$ constraint if its value belongs to the corresponding set.
A shape constraint $\ShapeConstr$ is either a shape label, or a
\emph{negated shape label}, that is a shape label preceded by the bang
($\Neg$) symbol. A neighbour node of the focus node satisfies a non
negated $\ShapeLabel$ if its neighbourhood satisfies the shape
definition that corresponds to that shape label. A negated
$\ShapeLabel$ is satisfied by a node if its neighbourhood does not
satisfy the corresponding shape definition.

\subsection{Well-defined Schemas}
We assume a fixed ShEx schema $\mathsf{Sch}$ whose set of
labels is $\Shapes$. For a shape label $\mathsf{S}$, we denote by
$\definition(\mathsf{S})$ its shape definition in $\mathsf{Sch}$, and
by $\expr(\mathsf{S})$ the shape expression within the definition of
$\mathsf{S}$.

The \emph{dependency graph} of the schema $\mathsf{Sch}$ describes how
shape labels refer to each other in shape definitions. More formally,
the dependency graph of $\mathsf{Sch}$ is an oriented graph whose nodes
are the elements of $\Shapes$, and that has an edge from $\mathsf{S}$
to $\mathsf{T}$ if the shape label $\mathsf{T}$ appears in some triple
constraint in $\expr(\mathsf{S})$.

Let $\mathsf{S}$ and $\mathsf{T}$ be shape labels. We say that
$\mathsf{T}$ \emph{appears negated} in $\definition(\mathsf{S})$ if
either $\Neg\mathsf{T}$ appears in $\definition(\mathsf{S})$, or there
is some triple constraint $\mathsf{q}\ X_1\ \AND \ldots \AND\ X_k$ in
$\expr(\mathsf{S})$ such that $\mathsf{q}$ is an extra property in
$\definition(\mathsf{S})$, and $\mathsf{T}$ is a shape label among
$X_1, \ldots, X_k$. For instance, in the Running Example, 
the shape labels {\sf <TesterShape>} and {\sf
  <ProgrammerShape>} appear negated in the definition of {\sf <IssueShape>}
because {\sf is:reportedBy} is an extra property. We denote by
$\negatedShapes(\mathsf{S})$ the set of shape labels that appear
negated in $\definition(\mathsf{S})$, and 
$\negatedShapes(\mathsf{Sch})$ is the union of
$\negatedShapes(\mathsf{S})$ for all $\mathsf{S}$ in $\Shapes$.

The following syntactic restriction is imposed in order to guarantee
well-defined semantics for ShEx schemas in presence of
recursion and negation. It requires that shape labels that appear
negated do not lead to cyclic dependencies between shapes. % As a future
% work, we plan to relax this restriction and use stratification.
From now on, we assume that all schemas are well-defined.
\begin{DEF}[Well-defined schema]
  \label{def:well-defined-shex}
  A shape expression schema $\mathsf{Sch}$ is \emph{well-defined} if
  for every shape labels $\mathsf{S}, \mathsf{T}$, if $\mathsf{T}$ is
  in $\negatedShapes(\mathsf{S})$, then the sub-graph accessible from
  $\mathsf{T}$ in the dependency graph of $\mathsf{Sch}$ is a direct
  acyclic graph.
\end{DEF}

\subsection{Declarative Semantics of ShEx}

\paragraph{Locally Satisfying a Shape Definition} For every shape
definition, we need to refer to the occurrences of its triple
constraints. Therefore, we are going to use
$\mathsf{C}_1,\ldots,\mathsf{C}_k$ as unique names for the triple
constraints that appear in a shape definition. The shape definition to
which the $\mathsf{C}_i$ belong will be clear from the context. Note
that if the same triple constraint appears twice (i.e. same property
and same value class), the two occurrences are distinguished and
correspond to different $\mathsf{C}_i$'s. To say it differently, any
of the $\mathsf{C}_i$ corresponds to a $\TripleConstraint$-position in
the abstract syntax tree of a shape definition.

For every shape definition we identify a set of triple consumers, that
correspond either to some triple constraint, or to an extra property,
or to the unconstrained properties of open (i.e. not closed) shape
definitions. Intuitively, a $\node$ locally satisfies a shape
definition if all edge from $\neigh(\node)$ can be consumed by
one of the triple consumers of that shape definition, in a way that
satisfies its shape expression.

Formally, let $\ShDef$ be a shape definition (fixed in the sequel),
and let $\mathsf{C}_1, \ldots, \mathsf{C}_k$ be the set of its triple
constraints. The set of \emph{triple consumers} of $\ShDef$ consists
of:
\begin{itemize}
\item $\TCons_{\mathsf{C}_i}$ for all triple constraint $\mathsf{C}_i$ in $\ShDef$;
\item $\TCons_{\mathsf{q},\extra}$ for all extra property $\mathsf{q}$ in
  $\ShDef$;
\item $\TCons_\open$ which is a special constant.
\end{itemize}

We say that an edge $\edge = (\node, \mathsf{q}, \node')$
\emph{matches} a triple consumer $\TCons$ if either $\TCons =
\TCons_{\mathsf{q}, \extra}$, or $\TCons =
\TCons_{\mathsf{C}_i}$ with $\mathsf{C}_i = \mathsf{q}\ X_1
\AND{} \ldots \AND{} X_k$ and for all $1 \le j \le k$ such that $X_j$ is a
value set, it holds that $\val(\node') \in X_j$.

\begin{DEF*}[Local witness]
  \label{def:local-witness}
  Let $\ShDef$ be a shape definition, $\Consumers$ be the its set of
  triple consumers, and let $\node$ be a node.  Let
  $\witness : \neigh(\node) \rightarrow \Consumers$ be a total mapping
  that maps a triple consumer with every edge in $\neigh(\node)$. We
  say that the mapping $\witness$ is a \emph{local witness for the
    fact that $\neigh(\node)$ satisfies $\ShDef$}, written $\witness,
  \neigh(\node) \vdash \ShDef$, iff:
  \begin{itemize}
  \item For all edge $\edge = (\node, \mathsf{q}, \node')$ in
    $\neigh(\node)$:\\
    $-$ $\edge$ matches $\witness(\edge)$ whenever $\witness(\edge)$ is of the form $\TCons_{\mathsf{q}, \extra}$ or $\TCons_{\mathsf{C}_i}$;\\
    $-$ if $\witness(\edge) = \TCons_{\mathsf{q}, \extra}$, then there
    is no $\TCons_{\mathsf{C}_i}$ in $\Consumers$ s.t. and all the conjuncts in
    $\mathsf{C}_i$ are value sets and $\edge$ matches $\TCons_{\mathsf{C}_i}$ ;\\
    $-$ if $\witness(\edge) = \TCons_{\open}$, then $\Consumers$ does
    not contain any triple consumer of the form
    $\TCons_{\mathsf{q},\extra}$ or $\TCons_{\mathsf{C}_i}$ with
    $\mathsf{C}_i$ of the form $\mathsf{q}\ \mathsf{K}$ (i.e. having
    the same property $\mathsf{q}$).
    % \begin{itemize}
    % \item $\edge$ matches $\witness(\edge)$;
    % \item if $\witness(\edge) = \TCons_{\mathsf{q}, \extra}$, then there
    %   is no $\TCons_{\mathsf{C}_i}$ in $\Consumers$ such that $\edge$
    %   matches $\TCons_{\mathsf{C}_i}$, and all the conjuncts in
    %   $\mathsf{C}_i$ are value sets;
    % \item if $\witness(\edge) = \TCons_{\open}$, then $\Consumers$ does
    %   not contain any triple consumer of the form
    %   $\TCons_{\mathsf{q},\extra}$ or $\TCons_{\mathsf{C}_i}$ with
    %   $\mathsf{C}_i$ of the form $\mathsf{q}\ \mathsf{K}$ (i.e. having
    %   the same property $\mathsf{q}$).
    % \end{itemize}
  \item If $\ShDef$ is forward closed (respectively, inverse closed),
    then there is no forward edge (respectively, inverse edge) that is
    mapped with $\TCons_\open$ by $\witness$.
  \item Let $\Neigh_\expr$ be the set of edges $\edge$ from
    $\neigh(\node)$ such that $\witness(\edge)=\TCons_{\mathsf{C}_i}$
    for some triple constraint $\mathsf{C}_i$ in $\ShDef$, then it holds
    that $\witness, \Neigh_\expr \vdash \expr(\ShDef)$.
  \end{itemize}
  
  For a set of neighbourhood edges $\Neigh$, a shape expression $\Expr$,
  and a mapping $\witness: \Neigh \to \Consumers$, we say that
  $\witness$ is a \emph{local witness for the fact that $\Neigh$
    satisfies $\Expr$}, written $\witness, \Neigh \vdash \Expr$, iff:
  \begin{itemize}
  \item $\Expr$ is the empty shape, and $\Neigh = \emptyset$;
  \item $\Expr = \mathsf{C}_i$ is a triple constraint, $\Neigh =
    \{\edge\}$ is a singleton set, and $\witness(\edge) = \mathsf{C}_i$.
  \item $\Expr = \Expr_1 \mid\ldots\mid \Expr_k$ is a some-of shape, and
    $\witness, \Neigh \vdash \Expr_i$ for some $1\leq i \leq k$.
  \item $\Expr = \Expr_1,\ldots,\Expr_k$ is a group shape, and for all
    $1\leq i\leq k$, denote by $\Neigh_i$ the subset of $\Neigh$ that
    contains $\edge$ iff $\witness(\edge) =
    \TCons_{\mathsf{C}_j}$ for some triple constraint
    $\mathsf{C}_j$ in $\Expr_i$. Then it holds that $\Neigh =
    \Neigh_{1}\cup\ldots\cup\Neigh_{k}$, and $\witness, \Neigh_{i} \vdash
    \Expr_i$ for all $1 \leq i \leq k$.
  \item $\Expr\mathsf{[min;max]}$ is a repetition shape, and there
    exists $\mathsf{m}$ that belongs to the interval $\mathsf{[min;max]}$ such
    that one can partition $\Neigh$ in $m$ disjoint sets
    $\Neigh_1,\ldots,\Neigh_m$ whose union is $\Neigh$, and such that
    $\witness, \Neigh_i \vdash \Expr$ for all $1\leq i \leq m$. \qed
  \end{itemize} 
\end{DEF*}

\begin{example}
\label{ex:local-witness}
With the schema and data from the Running Example, and with $\TCons_0$ as short for $\TCons_{\textsf{is:reproducedBy},\extra}$, we have that 
$\edge_1$ matches $\TCons_{\mathsf{C}_1}$, $\edge_5$
matches $\TCons_\open$, (the inverse of) $\edge_6$ matches $\TCons_{\mathsf{C}_4}$, and $\edge_j$ matches $\TCons_0,
\TCons_{\mathsf{C}_2}, \TCons_{\mathsf{C}_3}$, for $j = 2,3,4$. The mapping $\witness$ defined here after is a local witness for the fact that {\sf ex:issue} satisfies the definition of {\sf <IssueShape>}. 
  \begin{multline}
    \label{eq:witness}
    \edge_1 \mapsto \TCons_{\mathsf{C}_1}, \edge_5 \mapsto \TCons_\open, \edge_6 \mapsto \TCons_{\mathsf{C}_4}, \\
\edge_2 \mapsto \TCons_{\mathsf{C}_2},
    \edge_3 \mapsto \TCons_{\mathsf{C}_3}, \edge_4 \mapsto \TCons_{\mathsf{C}_3}
  \end{multline}
\end{example}

\paragraph{Global Typing Witness}
The shape labels that appear in triple constraints allow to propagate
constraints beyond the immediate neighbourhood of a node. Thus, the
validity of a graph with respect to a ShEx schema is a
global property on the graph, and is captured by the notion of a global
typing witness to be defined shortly.

As previously, we consider a ShEx schema $\mathsf{Sch}$
whose set of shape labels is $\Shapes$. Let $\NegatedShapes$ be the
set of shape labels of the form $\Neg{\mathsf{S}}$ where $\mathsf{S}$
is in $\negatedShapes(\mathsf{Sch})$. That is, $\NegatedShapes$
contains all the shapes that appear negated in $\mathsf{Sch}$,
decorated by a leading $\Neg{}$ sign.

A \emph{typing} of a graph $\mathsf{G}$ by $\mathsf{Sch}$ is a set
$\typing \subseteq \Nodes_\mathsf{G} \times (\Shapes \cup
\NegatedShapes)$ of couples of the form $(\node, \mathsf{S})$ or
$(\node, \Neg\mathsf{S})$, and such that there is no $\node$ and no shape
label $\mathsf{S}$ for which both $(\node, \mathsf{S})$ and $(\node,
\Neg{\mathsf{S}})$ belong to $\typing$. 
Let $\witness$ be a local witness, then the \emph{propagation} of
$\witness$ is the typing $\mathsf{propagation}_{\mathsf{witness}}$
that contains precisely the couples $(\node', X)$ for all edge
$(\node, \mathsf{q}, \node')$ in the domain of $\witness$ such that
$\witness(\node, \mathsf{q}, \node') = \TCons_{\mathsf{C}_i}$
corresponds to a triple constraint $\mathsf{C_i}$, and $X$ is a shape
label or negated shape label that appears as a conjunct in
$\mathsf{C}_i$.

\begin{DEF}[Global typing witness]
  \label{def:global-typing-witness}
  A \emph{global typing witness} for a graph $\mathsf{G}$ by a
  schema $\mathsf{Sch}$ is a couple $\typing,\lw$, where $\typing$ is
  a typing of $\mathsf{G}$ by $\mathsf{Sch}$, and $\lw$ is a total map
  from $\typing \cap (\Nodes \times \Shapes)$ to local witnesses,
  s.t. for all $(\node, \mathsf{S})$ in $\typing$, it holds that
  $\lw(\node, \mathsf{S})$ is a local witness for the fact that
  $\node$ satisfies $\definition(\mathsf{S})$. Additionally:
\begin{description}
\item [gtw-sat] The constraints required by every (non negated) shape
  label are propagated. That is, for all $(\node, \mathsf{S}) \in
  \typing$, it holds $\mathsf{propagation}_{\lw(\node, \mathsf{S})}
  \subseteq \typing$;
\item[gtw-neg] The negated shape labels cannot be satisfied, that is, if
  $(\node, \Neg{\mathsf{S}}) \in\typing$, then there does not exist a
  global typing witness $\typing', \lw'$ s.t. $(\node, \mathsf{S})
  \in\typing'$.
\item[gtw-extra] The edges consumed by extra consumers do not satisfy
  the corresponding triple constraints. That is, for all shape label
  $\mathsf{S}$ and all $\edge = (\node, \mathsf{q}, \node')$
  s.t. $\lw(\node, \mathsf{S})(\edge) = \TCons_{\mathsf{q}, \extra}$,
  and for all triple consumer $\TCons_{\mathsf{C}_i}$ that corresponds
  to a triple constraint $\mathsf{C}_i = \mathsf{q}\ X_1 \AND \ldots
  \AND X_k$ in $\definition(\mathsf{S})$, there is a $1 \le j \le k$
  such that
  \begin{description}
  \item[gtw-extra-value-set] either $X_j$ is a value set and\\
    $\val(\node') \not\in X_j$,
  \item[gtw-extra-shape-constr] or $X_j$ is a shape constraint, and
    there exists a global typing witness $\typing'', \lw''$ such that
    $(\node', Y_j) \in \typing''$, where $Y_j = \Neg X_j$ if $X_j$ is
    a shape label, and $Y_j = \mathsf{T}$ if $X_j = \Neg\mathsf{T}$ is
    a negated shape label. \qed
  \end{description}
\end{description}
\end{DEF}

\begin{example}
  \label{ex:global-typing-witness}
  With the shape and data from the Running Example, and with $\witness$ from Example~\ref{ex:local-witness}, there is no global typing witness that includes $\mathsf{propagation}_\witness$, because ({\sf ex:emin}, {\sf <ProgrammerShape>}) is in $\mathsf{propagation}_\witness$ and {\sf ex:emin} does not satisfy {\sf <ProgrammerShape>}. 
Let $\typing$ be the typing that with every subject node of the Running Example associates the shape labels listed in the comment on the right of that node in the example data,  e.g. $\typing$ contains ({\sf ex:ren}, {\sf <TesterShape>}), ({\sf ex:fatima}, {\sf <UserShape}>), ({\sf ex:fatima}, {\sf <ClientShape}>), etc. 
Then $\typing$ corresponds to a global typing witness, with $\lw(\node_{\textsf{ex:issue1}}, \textsf{<IssueShape>}) = \witness'$, with $\witness'$ identical to $\witness$ except for $\edge_4 \mapsto \TCons_0$.~$\square$
\end{example}

The definition of a global typing witness is recursive: in the \textbf{gtw-neg} we require
to ensure that some $\typing',\lw'$ is not a global typing witness
in order to ensure that $\typing,\lw$ is a global typing witness. It
what follows, we give some fundamental properties of global typing
witnesses that allow to show that the definition is well-founded.

\begin{theorem}
  \label{thm:certain-typing}
  For all graph $\mathsf{G}$ and all shape expression sche\-ma
  $\mathsf{Sch}$, there exists a global typing witness
  $\typingcert_{\mathsf{G}, \mathsf{Sch}},\lwcert_{\mathsf{G},
    \mathsf{Sch}}$ such that for all node $\node$ in $\mathsf{G}$ and all
  shape label $\mathsf{S} \in \negatedShapes(\mathsf{Sch})$, either
  $(\node,\mathsf{S})$ or $(\node, \Neg\mathsf{S})$ belongs to  $\typingcert_{\mathsf{G}, \mathsf{Sch}}$. 
\end{theorem}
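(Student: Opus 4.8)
The plan is to use the well-definedness condition (Definition~\ref{def:well-defined-shex}) to replace the apparently circular Definition~\ref{def:global-typing-witness} by a well-founded recursion. First I would collect the \emph{negated region} $\mathcal{N}$, defined as the union, over all $\mathsf{T}\in\negatedShapes(\mathsf{Sch})$, of the sub-graphs accessible from $\mathsf{T}$ in the dependency graph. This union is acyclic: any cycle in it would be accessible from some negated $\mathsf{T}$, contradicting that the sub-graph accessible from $\mathsf{T}$ is a DAG. Moreover $\mathcal{N}$ is closed under dependency successors and contains all of $\negatedShapes(\mathsf{Sch})$, so every negated label $\Neg{\mathsf{S}}$ that can ever occur in a typing has $\mathsf{S}\in\mathcal{N}$. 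Since the dependency graph restricted to $\mathcal{N}$ is a finite DAG, I can assign to each $\mathsf{S}\in\mathcal{N}$ a rank equal to the length of the longest dependency path leaving $\mathsf{S}$, so that every dependency edge strictly decreases the rank.

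The central lemma, proved by induction on this rank, would state that for every $\mathsf{S}\in\mathcal{N}$ and every node $\node$ the predicate ``there exists a global typing witness containing $(\node,\mathsf{S})$'' is well-defined and coincides with the existence of a local witness $\witness,\neigh(\node)\vdash\definition(\mathsf{S})$ all of whose side-conditions refer only to strictly lower-rank shapes. The key verification is that every recursive reference triggered by placing $(\node,\mathsf{S})$ in a typing strictly drops the rank: the positive propagations forced by \textbf{gtw-sat} concern shape labels occurring in triple constraints of $\expr(\mathsf{S})$, that is, dependency successors of $\mathsf{S}$; the non-existence demand of \textbf{gtw-neg} and the sub-witnesses required by \textbf{gtw-extra} concern labels of $\negatedShapes(\mathsf{S})$ (an $\mathsf{EXTRA}$ property turns its value-class shape labels into negated ones), which are again dependency successors of $\mathsf{S}$; and every negated label arising in these references lies in $\mathcal{N}$. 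Since all these targets sit strictly below $\mathsf{S}$, the induction hypothesis already decides them, so the predicate is determined at the rank of $\mathsf{S}$.

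Granting the lemma, I would build $\typingcert_{\mathsf{G},\mathsf{Sch}},\lwcert_{\mathsf{G},\mathsf{Sch}}$ explicitly. For each node $\node$ and each $\mathsf{S}\in\negatedShapes(\mathsf{Sch})$ I insert $(\node,\mathsf{S})$ when the lemma's predicate holds and $(\node,\Neg{\mathsf{S}})$ otherwise; then I close the set under the downward propagation of the chosen positive types, which only adds pairs $(\node',\mathsf{T})$ with $\mathsf{T}$ a lower-rank element of $\mathcal{N}$ (using dependency-closure of $\mathcal{N}$), and I take $\lwcert$ to record, for each positive entry, the local witness supplied by the lemma. Because every positive/negative choice is the canonical one, the result is a genuine typing (no $\node$ receives both $\mathsf{S}$ and $\Neg{\mathsf{S}}$). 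It then remains to check the three clauses of a global typing witness: \textbf{gtw-sat} holds because the set is propagation-closed by construction; \textbf{gtw-neg} holds because $(\node,\Neg{\mathsf{S}})$ was inserted exactly when no global typing witness contains $(\node,\mathsf{S})$; and \textbf{gtw-extra} holds because the sub-witnesses it demands exist precisely when the corresponding lower-rank predicate of the lemma holds. Decisiveness on $\negatedShapes(\mathsf{Sch})$ is immediate from the first insertion step, which is exactly the statement of the theorem.

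The main obstacle is the central lemma, and inside it the equivalence between the global predicate ``$\exists$ a global typing witness containing $(\node,\mathsf{S})$'' and its local, lower-rank reformulation. The forward implication is just the projection of any such witness onto $\node$ and its neighbourhood, but the backward implication forces me to assemble the single pair $(\node,\mathsf{S})$, its propagation, and the sub-witnesses demanded by \textbf{gtw-neg} and \textbf{gtw-extra} into one consistent global typing witness. I expect to do this by always committing to the canonical decision of the induction hypothesis at each strictly lower rank, so that any two sub-witnesses overlapping on a pair $(\node',\mathsf{T})$ with $\mathsf{T}\in\mathcal{N}$ necessarily agree; the types of labels outside $\mathcal{N}$ are never forced by $\mathcal{N}$-labels, so they may be left undefined. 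Turning this determinacy argument into a clean proof, rather than appealing to an unanalysed ``some witness'', is where the real care lies; everything else is a routine structural verification driven by the rank induction.
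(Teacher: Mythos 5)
Your proposal is correct and follows essentially the same route as the paper's (sketched) proof: both exploit well-definedness to obtain an acyclic dependency structure on the shapes reachable from $\negatedShapes(\mathsf{Sch})$, process these shapes bottom-up in that order, and for each node canonically insert the positive entry when a compatible local witness exists and the negated entry otherwise, verifying \textbf{gtw-sat}, \textbf{gtw-neg} and \textbf{gtw-extra} at each stage. Your rank function and explicit propagation-closure merely make rigorous and general what the paper illustrates on a four-shape example.
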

\begin{proof} (Sketch.) We sketch the proof using an example, and show
  how $\typingcert_{\mathsf{G}, \mathsf{Sch}},\lwcert_{\mathsf{G},
    \mathsf{Sch}}$ can be effectively computed, starting with an empty
  typing. The proof is based on the well-founded criterion of
  schemas. Suppose that $\negatedShapes(\mathsf{Sch}) =
  \{\mathsf{S}_1, \mathsf{S}_2, \mathsf{S}_3, \mathsf{S}_4\}$ and the
  dependency among them is $\mathsf{S}_1 \rightarrow \mathsf{S}_2$,
  $\mathsf{S}_1 \rightarrow \mathsf{S}_3$, $\mathsf{S}_2 \rightarrow
  \mathsf{S}_4$, $\mathsf{S}_3 \rightarrow \mathsf{S}_4$ (where
  $\mathsf{S}_1 \rightarrow \mathsf{S}_2$ means that $\mathsf{S}_2$
  appears in the definition of $\mathsf{S}_1$). We start by typing
  with $\mathsf{S}_4$, as follows. For all $\node$, if there exists a
  local witness $\witness$ for the fact that $\node$ satisfies
  $\mathsf{S}_4$, then we add $(\node, \mathsf{S}_4)$ to
  $\typingcert_{\mathsf{G}, \mathsf{Sch}}$ and we set
  $\lwcert_{\mathsf{G}, \mathsf{Sch}}(\node, \mathsf{S}_4) =
  \witness$. Otherwise, we add $(\node, \Neg\mathsf{S}_4)$ to
  $\typingcert_{\mathsf{G}, \mathsf{Sch}}$. After this first step,
  $\typingcert_{\mathsf{G}, \mathsf{Sch}},\lwcert_{\mathsf{G},
    \mathsf{Sch}}$ is a global typing witness. Indeed, it satisfies
  \textbf{gtw-sat} by definition. For \textbf{gtw-neg}, suppose by
  contradiction that $(\node, \Neg\mathsf{S}_4) \in
  \typingcert_{\mathsf{G}, \mathsf{Sch}}$ and there is a global typing
  witness $\typing',\lw'$ s.t. $(\node, \mathsf{S}_4) \in \typing'$,
  then $\lw'(\node, \mathsf{S}_4)$ is a local witness for the fact
  that $\node$ satisfies $\mathsf{S}_4$: this is a contradiction. For
  \textbf{gtw-extra}, the proof goes again by contradiction: if $\lwcert_{\mathsf{G},
    \mathsf{Sch}}(\node, \mathsf{S}_4)(\node, \mathsf{q}, \node') =
  \TCons_{\mathsf{q}, \extra}$ and there is $\TCons_{\mathsf{C}_i}$
  consumer in the definition of $\mathsf{S}_4$ s.t. $\node'$ is in
  $\val(X_j)$ for all $X_j$ conjunct in $\mathsf{C}_i$, then
  $\lwcert_{\mathsf{G}, \mathsf{Sch}}(\node, \mathsf{S}_4)$ does not
  satisfy the definition of local witness: contradiction.
  
  We next associate the shapes $\mathsf{S}_2$ and
  $\mathsf{S}_3$ to all nodes. We can do it in any order, because they
  both only depend on $\mathsf{S}_4$. We illustrate taking
  $\mathsf{S}_2$. Let $\typing^{\mathsf{prev}}, \lw^{\mathsf{prev}}$
  be $\typingcert_{\mathsf{G}, \mathsf{Sch}}, \lw_{\mathsf{G},
    \mathsf{Sch}} $ as obtained during the previous step (i.e.,
  $\typing^{\mathsf{prev}}$ uses only $\mathsf{S}_4$ and
  $\Neg\mathsf{S}$ as shapes). For all $\node$, if there exists a local
  witness $\witness$ for the fact that $\node$ satisfies
  $\mathsf{S}_2$ and s.t. $\mathsf{propagation}_\witness \subseteq
  \typing^{\mathsf{prev}}$, we add $(\node, \mathsf{S}_2)$ to
  $\typingcert_{\mathsf{G}, \mathsf{Sch}}$, otherwise we add $(\node,
  \Neg\mathsf{S}_2)$ to $\typingcert_{\mathsf{G},
    \mathsf{Sch}}$. Using similar arguments as for the previous step,
  and that $\typing^{\mathsf{prev}},\lw^{\mathsf{prev}}$ is a
  global typing witness, we show that the new
  $\typingcert_{\mathsf{G}, \mathsf{Sch}}, \lwcert_{\mathsf{G},
    \mathsf{Sch}}$ is a global typing witness too.

  This process is repeated until all negated shape labels have been
  processed, and following the ordering induced by the (acyclic)
  dependency graph. For instance, $\mathsf{S}_1$ can be added only
  after $\mathsf{S}_2$ and $\mathsf{S}_3$ are both added.
\end{proof}

The following Corollary~\ref{cor:certain-typing} establishes that all global typing witness
agrees with $\typingcert_{\mathsf{G},
  \mathsf{Sch}},\lwcert_{\mathsf{G}, \mathsf{Sch}}$ on the shape labels
that appear negated in $\mathsf{Sch}$. It follows from the proof of
Theorem~\ref{thm:certain-typing} and from
Definition~\ref{def:global-typing-witness}. This allows us to give an equivalent, non recursive definition for a
global typing witness in Lemma~\ref{lem:def-global-typing-witness}. 

\begin{corollary}
  \label{cor:certain-typing}
  If $\typing,\lw$ is a global typing witness for the graph
  $\mathsf{G}$ by the schema $\mathsf{Sch}$, then for all $\mathsf{S}
  \in \negatedShapes(\mathsf{Sch})$, and for all node $\node$ in
  $\mathsf{G}$, if $(\node, \mathsf{S}) \in \typing$, then $(\node,
  \mathsf{S}) \in \typingcert_{\mathsf{G}, \mathsf{Sch}}$, and if $(\node,
  \Neg\mathsf{S}) \in \typing$, then $(\node, \Neg\mathsf{S}) \in
  \typingcert_{\mathsf{G}, \mathsf{Sch}}$.
\end{corollary}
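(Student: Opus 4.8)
The plan is to derive the statement directly from two facts that Theorem~\ref{thm:certain-typing} already supplies: that $\typingcert_{\mathsf{G},\mathsf{Sch}},\lwcert_{\mathsf{G},\mathsf{Sch}}$ is \emph{itself} a global typing witness, and that it is \emph{total} on $\negatedShapes(\mathsf{Sch})$ in the sense that for every node $\node$ and every $\mathsf{S}\in\negatedShapes(\mathsf{Sch})$ it contains exactly one of $(\node,\mathsf{S})$ and $(\node,\Neg{\mathsf{S}})$ (exactly one, since a typing, by definition, never contains both signs for the same pair). The engine of the argument is condition \textbf{gtw-neg}, which is the key observation: it ties the $\Neg$-decisions of \emph{any} global typing witness to a single global, witness-independent question, namely whether \emph{some} global typing witness assigns the corresponding positive shape. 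I would therefore fix an arbitrary global typing witness $\typing,\lw$ and argue the two signs separately; notably, no induction along the dependency graph is required here, because \textbf{gtw-neg} already quantifies over all global typing witnesses at once.

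For the negative direction, assume $(\node,\Neg{\mathsf{S}})\in\typing$ with $\mathsf{S}\in\negatedShapes(\mathsf{Sch})$. Applying \textbf{gtw-neg} to $\typing$ yields that no global typing witness contains $(\node,\mathsf{S})$; in particular $\typingcert_{\mathsf{G},\mathsf{Sch}}$, being a global typing witness, does not. Totality then forces $(\node,\Neg{\mathsf{S}})\in\typingcert_{\mathsf{G},\mathsf{Sch}}$, as wanted. For the positive direction, assume $(\node,\mathsf{S})\in\typing$ and suppose for contradiction $(\node,\mathsf{S})\notin\typingcert_{\mathsf{G},\mathsf{Sch}}$; totality gives $(\node,\Neg{\mathsf{S}})\in\typingcert_{\mathsf{G},\mathsf{Sch}}$. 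Applying \textbf{gtw-neg} this time to $\typingcert_{\mathsf{G},\mathsf{Sch}}$ shows that no global typing witness contains $(\node,\mathsf{S})$ -- which contradicts the hypothesis that $\typing$ is one and does contain it. Hence $(\node,\mathsf{S})\in\typingcert_{\mathsf{G},\mathsf{Sch}}$, completing this case.

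The one delicate point -- and where I would be most careful -- is the legitimacy of wielding \textbf{gtw-neg} as an ordinary property, since it is phrased in terms of the existence of \emph{other} global typing witnesses and thus makes the notion look self-referential. This is exactly what the stratified construction in the proof of Theorem~\ref{thm:certain-typing} licenses: by processing the members of $\negatedShapes(\mathsf{Sch})$ in an order induced by the acyclic dependency subgraph guaranteed by well-definedness, that proof establishes that ``being a global typing witness'' is a well-founded predicate, so quantification over all such witnesses is meaningful and $\typingcert_{\mathsf{G},\mathsf{Sch}}$ genuinely satisfies \textbf{gtw-neg}. Once this is granted, the corollary is a two-line consequence; accordingly I would devote the bulk of the write-up to stating precisely that the witness-hood and the totality of $\typingcert_{\mathsf{G},\mathsf{Sch}}$ on $\negatedShapes(\mathsf{Sch})$ are exactly the outputs of the theorem, and that the mutual exclusivity of the two signs is built into the definition of a typing.
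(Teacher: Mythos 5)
Your argument is correct. The paper itself gives no explicit proof of Corollary~\ref{cor:certain-typing}, saying only that it ``follows from the proof of Theorem~\ref{thm:certain-typing} and from Definition~\ref{def:global-typing-witness}'' --- which suggests the intended justification retraces the stratified, level-by-level construction of $\typingcert_{\mathsf{G},\mathsf{Sch}}$ along the acyclic dependency order. You instead derive the corollary from the \emph{statement} of the theorem alone (witness-hood of $\typingcert_{\mathsf{G},\mathsf{Sch}}$ plus its totality on $\negatedShapes(\mathsf{Sch})$) together with \textbf{gtw-neg} and the built-in consistency of typings, applying \textbf{gtw-neg} once to $\typing$ for the negative sign and once to $\typingcert_{\mathsf{G},\mathsf{Sch}}$ for the positive sign. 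Both steps check out, and your route is arguably the more modular one: it makes the corollary independent of the particular construction in the theorem's proof, needing no induction over the dependency graph. The one caveat you flag --- that \textbf{gtw-neg} quantifies over all global typing witnesses and so the notion must first be known to be well-founded --- is real, and your deferral of that point to the well-definedness argument in the proof of Theorem~\ref{thm:certain-typing} is exactly where the paper's remark about ``the proof of the theorem'' earns its keep; with that granted, your two-case argument is complete.
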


\begin{lemma}
  \label{lem:def-global-typing-witness}
  In Definition~\ref{def:global-typing-witness}, the \textbf{\upshape
    gtw-neg} and \textbf{\upshape gtw-extra-shape-constr} conditions can
  be replaced by the following weaker conditions \textbf{\upshape
    gtw-neg${}'$} and \textbf{\upshape gtw-extra-shape-constr${}'$}
  respectively, while leaving the underlying notion of global typing
  witness unchanged:
  \begin{description}
  \item[gtw-neg${}'$] $(\node, \Neg\mathsf{S}) \in \typing$ only if
    $(\node, \Neg\mathsf{S}) \in \typingcert_{\mathsf{G},
      \mathsf{Sch}}$.
  \item[gtw-extra-shape-constr${}'$] or $X_j$ is a shape constraint, and
    $(\node', Y_j) \in \typingcert_{\mathsf{G}, \mathsf{Sch}}$, where
    $Y_j = \Neg X_j$ if $X_j$ is a shape label, and $Y_j = \mathsf{T}$
    if $X_j = \Neg\mathsf{T}$ is a negated shape label.
  \end{description}
\end{lemma}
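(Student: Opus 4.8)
The plan is to prove this lemma by showing that the modified conditions define exactly the same set of global typing witnesses as the original conditions. I would argue this in two directions: first that any global typing witness (satisfying the original \textbf{gtw-neg} and \textbf{gtw-extra-shape-constr}) also satisfies the primed versions, and conversely that any typing satisfying the primed conditions (together with the unchanged \textbf{gtw-sat} and \textbf{gtw-extra-value-set}) is in fact a genuine global typing witness in the original sense. The central tool for both directions is Corollary~\ref{cor:certain-typing}, which pins down the behaviour of every global typing witness on the negated shapes of $\mathsf{Sch}$.

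For the forward direction, suppose $\typing,\lw$ is a global typing witness under the original definition. For \textbf{gtw-neg${}'$}: if $(\node, \Neg\mathsf{S}) \in \typing$ with $\mathsf{S} \in \negatedShapes(\mathsf{Sch})$, then Corollary~\ref{cor:certain-typing} immediately gives $(\node, \Neg\mathsf{S}) \in \typingcert_{\mathsf{G}, \mathsf{Sch}}$, which is exactly \textbf{gtw-neg${}'$}. For \textbf{gtw-extra-shape-constr${}'$}: the original condition supplies a global typing witness $\typing'',\lw''$ with $(\node', Y_j) \in \typing''$; since $X_j$ is a shape constraint, $Y_j$ (being either $\Neg X_j$ or $\mathsf{T}$) mentions a shape label that lies in $\negatedShapes(\mathsf{Sch})$, so again Corollary~\ref{cor:certain-typing} transfers the membership to $\typingcert_{\mathsf{G}, \mathsf{Sch}}$, yielding the primed condition. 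Here I must check carefully that the relevant shape label genuinely belongs to $\negatedShapes(\mathsf{Sch})$: this holds because \textbf{gtw-extra} is invoked precisely when the property $\mathsf{q}$ is an \extra{} property, and by the definition of $\negatedShapes$ the shape labels appearing in triple constraints on extra properties are exactly those that appear negated.

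For the converse direction, suppose $\typing,\lw$ satisfies \textbf{gtw-sat}, \textbf{gtw-extra-value-set}, and the two primed conditions; I want to recover the original \textbf{gtw-neg} and \textbf{gtw-extra-shape-constr}. For \textbf{gtw-neg}, assume $(\node, \Neg\mathsf{S}) \in \typing$. By \textbf{gtw-neg${}'$} we have $(\node, \Neg\mathsf{S}) \in \typingcert_{\mathsf{G}, \mathsf{Sch}}$, and since $\typingcert_{\mathsf{G}, \mathsf{Sch}}$ is a typing it cannot also contain $(\node, \mathsf{S})$. I then use the defining property of $\typingcert_{\mathsf{G}, \mathsf{Sch}}$ from Theorem~\ref{thm:certain-typing} together with Corollary~\ref{cor:certain-typing}: if there existed a global typing witness $\typing',\lw'$ with $(\node, \mathsf{S}) \in \typing'$, then Corollary~\ref{cor:certain-typing} would force $(\node, \mathsf{S}) \in \typingcert_{\mathsf{G}, \mathsf{Sch}}$, contradicting the disjointness just noted; hence no such $\typing',\lw'$ exists, which is exactly \textbf{gtw-neg}. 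For \textbf{gtw-extra-shape-constr}, the primed condition gives $(\node', Y_j) \in \typingcert_{\mathsf{G}, \mathsf{Sch}}$, and since $\typingcert_{\mathsf{G}, \mathsf{Sch}},\lwcert_{\mathsf{G}, \mathsf{Sch}}$ is itself a global typing witness (by Theorem~\ref{thm:certain-typing}), it serves as the witness $\typing'',\lw''$ demanded by the original condition.

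The main obstacle I anticipate is the apparent circularity lurking in the converse direction: the original definition of a global typing witness is recursive, so to verify that a $\typing,\lw$ satisfying the primed conditions really is a global typing witness, one must ensure that the recursion bottoms out without ambiguity. The resolution is that Theorem~\ref{thm:certain-typing} and Corollary~\ref{cor:certain-typing} have already established that $\typingcert_{\mathsf{G}, \mathsf{Sch}}$ is a well-defined, canonical object determined by the acyclic dependency structure on $\negatedShapes(\mathsf{Sch})$, and that every global typing witness must agree with it on those shapes. Thus the primed conditions, by referring directly to the fixed object $\typingcert_{\mathsf{G}, \mathsf{Sch}}$ rather than to an existentially quantified witness, eliminate the recursion entirely while provably capturing the same constraint; the only real work is the bookkeeping above confirming the equivalence preserves membership in both directions.
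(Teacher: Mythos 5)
Your argument is correct and follows exactly the route the paper intends: the paper leaves Lemma~\ref{lem:def-global-typing-witness} unproved but notes that it follows from Theorem~\ref{thm:certain-typing} and Corollary~\ref{cor:certain-typing}, and your two directions (the corollary transfers memberships into $\typingcert_{\mathsf{G},\mathsf{Sch}}$ for the forward implication, while the fact that $\typingcert_{\mathsf{G},\mathsf{Sch}},\lwcert_{\mathsf{G},\mathsf{Sch}}$ is itself a global typing witness discharges the existential requirements in the converse) are precisely the bookkeeping that remark presupposes. Your explicit check that the relevant shape labels lie in $\negatedShapes(\mathsf{Sch})$, so that the corollary actually applies, is a worthwhile detail the paper glosses over.
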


%%% Local Variables: 
%%% mode: latex
%%% TeX-master: "paper"
%%% End: 

\section{Validation Algorithm }
\label{sec:validation}

The fundamental question in validation is ``does X satisfy Y?'', e.g. ``does {\sf ex:issue1} satisfy {\sf <IssueShape>}?''.
Following is a validation algorithm which, given an initial typing $\typing_0$ that contains typing requirements such as {\sf (ex:issue1, <IssueShape>)}, constructs a \emph{global typing witness} $\typing,\lw$ that includes $\typing_0$ if one exists, and raises a validation error otherwise.
Throughout the section we consider a fixed
graph $\mathsf{G}$ with nodes $\Nodes$, edges $\Edges$ and a value
function $\val$, a fixed shape expression schema $\mathsf{Sch}$ over a
set of shape labels $\Shapes$, and a fixed (partial) initial typing
$\typing_0$ of $\mathsf{G}$ by $\mathsf{Sch}$. We start by presenting
a high-level version of the algorithm, and then discuss some
implementation and optimization aspects.

\subsection{Data Structures}
\label{sec:validation-data-structures}

The flooding algorithm produces a global typing witness. It proceeds
by making \emph{typing hypotheses}, that are associations of a node
and a shape label. The algorithm tries to witness that each such
hypothesis is satisfied, or otherwise removes it, until a global
typing witness is obtained. The satisfaction of a typing hypothesis
might \emph{require} other typing hypotheses to be satisfied. During
its computation, the algorithm maintains a structure called
\emph{typing witness under construction} denoted $\TUC$, which
contains the current hypotheses, together with additional data
useful for the computation. More precisely $\TUC = (\typinghyp,
\lwhyp, \requires, \toCheck)$, where $\typinghyp,\lwhyp$ is the global
typing witness under construction, $\requires \subseteq \typinghyp
\times \typinghyp$ is a binary relation on hypotheses, and $\toCheck$
is a subset of $\typinghyp \cap (\Nodes \times \Shapes)$ of not yet
verified hypotheses. We suppose that $\typingcert_{\mathsf{G},
  \mathsf{Sch}}, \lwcert_{\mathsf{G}, \mathsf{Sch}}$ is given, denoted
$\typingcert, \lwcert$ for short. Finally, the algorithm uses a global
map $\ULW$ that, with every node $\node$ and shape label $\mathsf{S}$,
associates the set of local witnesses that can potentially be used for
proving that $\node$ satisfies $\mathsf{S}$. A local witness
$\witness$ is removed from $\ULW(\node, \mathsf{S})$ when we know that
it cannot be used in any global typing witness, that is, when there
does not exist a global typing witness $\typing,\lw$ s.t. $(\node,
\mathsf{S})$ belongs to $\typing$ and $\mathsf{propagation}_\witness$
is included in $\typing$.

\subsection{The Flooding Algorithm}
\label{sec:validation-flooding-algo}
The flooding algorithm is presented in
Algorithm~\ref{algo:flooding}. It starts by checking whether the
initial typing $\typing_0$ is compatible with the certain types
$\typingcert$, and if not, it signals a validation error. Otherwise,
the typing under construction is initialized so that it contains
$\typing_0$ as initial hypotheses, all the initial hypotheses
involving a non-negated shape label are added to $\toCheck$, whereas
the $\lwhyp$ and $\requires$ components are empty
(lines~\ref{lalgo:flooding-init-start}
to~\ref{lalgo:flooding-init-end}). The function
$\mathsf{checkCompatible}$ takes as parameters two typings and returns
false iff there are a node $\node'$ and a shape label $\mathsf{T}$
s.t. one of the typings contains $(\node', \mathsf{T})$ and the other
contains $(\node', \Neg\mathsf{T})$.

\begin{algorithm}
  \caption{$\mathsf{FloodingValidation}$\label{algo:flooding}}

  \KwIn {$\mathsf{Sch}$ a shape expression schema over $\Shapes$,\\
    \hspace{2em} $\mathsf{G} = (\Nodes,\Edges, \val)$ a graph,\\
    \hspace{2em} $\typing_0$ a pre-typing of $\mathsf{G}$ by $\mathsf{Sch}$,\\
    \hspace{2em} $\ULW$ a map from $\Nodes \times \Shapes$ to sets of  local witnesses\\
    \hspace{2em} $\typingcert, \lwcert$ a global typing witness}
  \KwOut{$\typing,\lw$ a global typing witness}

  \BlankLine
  \If {\KwNot $\mathsf{checkCompatible}(\typing_0, \typingcert)$} {
    \Return {\sf VALIDATION\_ERROR}
  }
 \BlankLine

 \KwLet $\TUC \colonequals (\typinghyp, \lwhyp, \requires, \toCheck)$, where  \label{lalgo:flooding-init-start}\\
 \Indp $\typinghyp = \typing_0$, $\lwhyp = \emptyset$,  $\requires = \emptyset$, \\
 and $\toCheck = \typing_0 \cap (\Nodes \times \Shapes)$  \label{lalgo:flooding-init-end}\\
 \Indm

  \BlankLine
  \While {$\toCheck \neq \emptyset$ \label{lalgo:flooding-main-loop}}{ 
    $(\node, \mathsf{S}) \colonequals $ remove from $\toCheck$ 
    \BlankLine

    \If {$(\node, \mathsf{S}) \in \typingcert$ \label{lalgo:flooding-if-certain}} {
      \KwCont
    }
    \BlankLine
    
    \ElseIf  {$\ULW(\node, \mathsf{S}) = \emptyset$ \label{lalgo:flooding-if-unsatisfied}} {
      $\mathsf{Backtracking}(\node, \mathsf{S}, \TUC)$
    }
    \BlankLine
    
    \Else(\label{lalgo:flooding-if-shape-name}) {
      $\witness \colonequals$  get first from $\ULW(\node, \mathsf{S})$ \label{lalgo:flooding-get-first}\;
      
      \If {$\mathsf{checkGtwExtra}(\mathsf{Sch}, \mathsf{S}, \witness, \typingcert)$ \KwAnd $ \mathsf{checkCompatible}(\mathsf{propagation}_\witness, \typingcert)$} {
        \BlankLine

        $\lwhyp(\node, \mathsf{S}) = \witness$ \label{lalgo:flooding-set-lwcert}\;
        \ForEach {$(\node', X)$ {\upshape in} $\mathsf{propagation}_\witness$ \label{lalgo:flooding-propagate-start}} {
          \If {$(\node', X) \not\in \typinghyp$} {
            add $(\node', X)$ to $\typinghyp$\;
            \If {$X \in \Shapes$} {
              add $(\node', X)$ to $\toCheck$\;
            }
          }
          \If {$X \in \Shapes$} {
            add $((\node, \mathsf{S})(\node', X))$ to $\requires$ \label{lalgo:flooding-propagate-end}\; 
          }
        }
      }
      \Else  {\label{lalgo:extra-violation}
        remove $\witness$ from $\ULW(\node, \mathsf{S})$\;
        add $(\node, \mathsf{S})$ to $\toCheck$\;
      }
    }
  }
  \If {$\typing_0 \subseteq \typinghyp$ \label{lalgo:flooding-after-main-loop}} {
    \ForEach {$(\node, \mathsf{S}) \in \typinghyp \wedge \typingcert$} {
      $\mathsf{copyProof}(\node, \mathsf{S}, \typingcert, \lwcert, \typinghyp, \lwhyp)$
    }
    \Return $\typing = \typinghyp, \lw = \lwhyp$
  } \Else {
    \Return {\sf VALIDATION\_ERROR}
  }
\end{algorithm}

\begin{algorithm}
  \caption{$\mathsf{Backtracking}$\label{algo:backtracking}}
  \KwIn {$(\node, \mathsf{S})$ a hypothesis, \\
    \hspace{3em} $\TUC = (\typinghyp, \lwhyp, \requires, \toCheck)$ \\
    \hspace{3.5em} a global typing witness under construction} 
  \ForEach {$(\node', \mathsf{S}') \in \mathsf{toRemove}(\node, \mathsf{S}, \requires)$ \label{lalgo:backtracking-toremove-loop}} { 
    remove $(\node', \mathsf{S}')$ from $\typinghyp$, $\lwhyp$, and $\toCheck$ \label{lalgo:backtracking-toremove}
  }
\end{algorithm}

The main loop on line~\ref{lalgo:flooding-main-loop} iterates on all
hypotheses in $\toCheck$, until all have been processed. For all
hypothesis $(\node, \mathsf{S})$ in $\toCheck$, we distinguish three
possible cases, that are the conditions on
lines~\ref{lalgo:flooding-if-unsatisfied}, \ref{lalgo:flooding-if-certain}, and~\ref{lalgo:flooding-if-shape-name}.
On line~\ref{lalgo:flooding-if-certain}, the hypothesis $(\node,
\mathsf{S})$ to be checked is known to be verified because
$\mathsf{S}$ is a certain type for $\node$.
Line~\ref{lalgo:flooding-if-unsatisfied} corresponds
to the case when the hypothesis $(\node, \mathsf{S})$ is recognized as
non provable because the set $\ULW(\node, \mathsf{S})$ of unchecked
hypotheses is empty. We need to backtrack, as explained
later on.
Finally, line~\ref{lalgo:flooding-if-shape-name} corresponds to the
case when there exists a local witness $\witness$ that has not been
used yet for verifying whether $\node$ satisfies shape
$\mathsf{S}$. Then we first use the Boolean functions
$\mathsf{checkGtwExtra}$ (see below) to make sure that $\witness$ does
not violate the \textbf{gtw-extra} condition, and
$\mathsf{checkCompatible}$ to make sure that propagating the witness
won't contradict the certain typing. If the check passes, then we set
$\lwhyp(\node, \mathsf{S})$ to $\witness$ (line~\ref{lalgo:flooding-set-lwcert}), meaning that we are going
to look for a valid typing compatible with $\witness$, then we
propagate the further requirements imposed by $\witness$ to the
neighbours of $\node$ (lines~\ref{lalgo:flooding-propagate-start}
to~\ref{lalgo:flooding-propagate-end}). Otherwise
(line~\ref{lalgo:extra-violation}), $\witness$ is removed from $\ULW$,
and $(\node, \mathsf{S})$ is added back to $\toCheck$ for further
checking.
The functions $\mathsf{checkGtwExtra}$ takes as input a schema
$\mathsf{Sch}$, a shape label $\mathsf{S}$, a local witness $\witness$,
and a certain typing $\typingcert$, and returns false iff there is an
edge $\edge = (\node, \mathsf{q}, \node')$ in the domain of $\witness$
s.t. $\witness(\edge) = \TCons_{\mathsf{q}, \extra}$ is an extra
consumer, and there exists a triple consumer $\TCons_{\mathsf{C}_i}$
in the definition of $\mathsf{S}$ (in $\mathsf{Sch}$) such that
$(\node', X) \in \typingcert$, for all shape constraint $X$ that is a
conjunct in $\mathsf{C}_i$.

When all the hypotheses have been processed
(line~\ref{lalgo:flooding-after-main-loop}), $\typinghyp,\lwhyp$ is
(almost) a global typing witness. However, it might not contain
$\typing_0$, because some of the initially required hypotheses have
been disproved (removed during backtracking). In that case, a
validation error is raised. Otherwise, we use the $\mathsf{copyProof}$
procedure to copy the proofs for the certain facts from $\typingcert,
\lwcert$ to $\typinghyp, \lwhyp$. These facts were previously
processed on line~\ref{lalgo:flooding-if-certain}. For all certain
fact $(\node, \mathsf{S})$, the $\mathsf{copyProof}$ procedure first
adds $\mathsf{propagation}_{\lwcert(\node, \mathsf{S})}$ to
$\typinghyp$, and then recursively copies the proofs for the newly
added certain facts. It terminates because the schema is well-founded.

\paragraph{Backtracking}
The backtracking algorithm is described in
Algorithm~\ref{algo:backtracking}. It removes all hypotheses that are
not relevant any more when we find out that a hypothesis $(\node,
\mathsf{S})$ cannot be satisfied. Intuitively, these are the
hypotheses $(\node', \mathsf{S}')$ that required $(\node,
\mathsf{S})$, and all hypotheses that are required (possibly
indirectly) by such $(\node', \mathsf{S}')$. More formally, this is
captured by the set $\mathsf{toRemove}(\node, \mathsf{S}, \requires)
\subseteq \typinghyp$ recursively defined below. That set can be
effectively computed using standard reachability algorithms on graphs.
\begin{itemize}
\item for all $(\node', \mathsf{S}')$ s.t. $requires((\node',
  \mathsf{S'}), (\node, \mathsf{S}))$, $(\node', \mathsf{S}')$ is in
  $\mathsf{toRemove}(\node, \mathsf{S}, \requires)$;
\item if $(\node',\mathsf{S}')$ is such that for all
  $(\node'',\mathsf{S}'')$ we have that
  $\requires((\node'',\mathsf{S}''), (\node',\mathsf{S}'))$ implies
  $(\node'',\mathsf{S}'') \in \mathsf{toRemove}(\node,
  \mathsf{S}, \requires)$, then $(\node',\mathsf{S}')$ is also in
  $\mathsf{toRemove}(\node, \mathsf{S}, \requires)$.
\end{itemize}

Additionally, in the loop on line~\ref{lalgo:backtracking-invalidate},
backtracking invalidates the local witnesses for the hypotheses
$(\node', \mathsf{S}')$ that required $(\node, \mathsf{S})$, and adds
them back for checking.

\newcommand{\potTC}{\mathsf{matchingTC}}
\newcommand{\CandULW}{\mathsf{CandidateULW}}

\paragraph{Computing the Sets of Unchecked Local Witnesses} Let
$\node$ be a node, $\mathsf{S}$ be a shape label which definition
contains the triple constraints $\mathsf{C}_1, \ldots, \mathsf{C}_k$
and has corresponding set $\Consumers$ of triple consumers. We compute
the set $\ULW(\node, \mathsf{S})$ by considering a set of
\emph{candidate} mappings from $\neigh(\node)$ to $\Consumers$, and
keeping those candidates that are actual local witnesses for the fact that
$\node$ satisfies the definition of $\mathsf{S}$. A mapping is a
candidate if it associates with every edge a triple consumer that this
edge matches. More formally, for all $\edge = (\node, \mathsf{q},
\node')$, let $\potTC(\edge) = \{\TCons_\open\}$ if $\mathsf{q}$ does
not appear in any of $\mathsf{C}_i$, neither as an extra property in
$\mathsf{S}$, and  $\potTC(\edge) = \{\TCons_{\mathsf{q}, \extra}\}
\cup \{\TCons \mid \edge \text{ matches } \TCons\} $ otherwise.
Then a candidate mapping is obtained by choosing one triple consumer
among $\potTC(\edge)$ for all $\edge$ in $\neigh(\node)$. For
instance, continuing Example~\ref{ex:local-witness}, a candidate map
from $\neigh(\node_{\textsf{ex:issue1}})$ to the triple consumers of
{\sf <IssueShape>} will always associate $\TCons_\open$ with
$\edge_5$, and will associate one among $\TCons_0$, 
$\TCons_{\mathsf{C}_2}$, and $\TCons_{\mathsf{C}_3}$ with $\edge_4$. We
denote $\CandULW(\node,\mathsf{S})$ the set of candidate mappings, and
it is obtained as the Cartesian product of the sets $\potTC(\edge)$,
for all $\edge$ in $\neigh(\node)$.

Once the set $\CandULW(\node,\mathsf{S})$ of candidate mappings is
constructed, we have to determine which among them are local witnesses
for the fact that $\node$ satisfies the definition of
$\mathsf{S}$. This can be done using algorithms that we proposed in
\cite{jose-rdf-validation} and \cite{icdt2015}. For that,
$\neigh(\node)$ is seen as a bag over the alphabet $\Consumers$, by
replacing every $\edge$ by $\witness(\edge)$. A \emph{bag} is an
unordered collection with possibly repeated symbols. For instance, the
bag that corresponds to the mapping (\ref{eq:witness}) from
Example~\ref{ex:local-witness} is
$\ml \TCons_{\mathsf{C}_1},
\TCons_\open, \TCons_{\mathsf{C}_4}, \TCons_{\mathsf{C}_2},
\TCons_{\mathsf{C}_3}, \TCons_{\mathsf{C}_3} \mr.$ Now, for checking
whether a bag belongs to the language of a regular bag expression, we
use either the algorithm from~\cite{jose-rdf-validation} based on
derivatives of regular expressions, or we use a slight modification of
the Interval algorithm presented on Figure~4 in
\cite{icdt2015}. Because the Interval algorithm supports only {\sf
  [0;1], [0; \unbound{}], [1; \unbound{}]} repetitions on
sub-expressions that are not triple constraints, we need to
\emph{unfold} repetitions that are not of this form. For instance,
$\Expr$[2;4] is to be replaced by this grouping expression whenever
$\Expr$ is not a triple constraint: $\Expr, \Expr, \Expr[0;1],
\Expr[0;1]$. After the unfolding, we can apply the Interval algorithm.

\paragraph{On the Complexity of Validation} In \cite{icdt2015}, we
showed that validation of ShEx schemas is NP-complete. Note that
validation remains in NP with the new constructs defined in the
present paper. The high complexity is due to verifying whether the
neighbourhood of a node locally satisfies a shape definition. In the
algorithm presented here, checking whether a candidate map is a local
witness is polynomial if the modified Interval algorithm is used, but
there is an exponential number of candidates to be considered (see
below). Note also that given $\typing,\lw$, it is trivial
(polynomial) to verify whether this is a global typing witness.

\subsection{Implementation and Optimization Guidelines}
A first, easily avoidable source of complexity is the computation of
$\typingcert, \lwcert$, to be fed as input of the flooding
algorithm. These can be computed using the algorithm sketched in the
proof of Theorem~\ref{thm:certain-typing}. An optimized implementation
should however compute them on the fly and on demand. This can be
performed using a version of the flooding algorithm, for which we give
here some guidelines. If a test $(\node, \mathsf{S}) \in \typingcert$
or $(\node, \Neg\mathsf{S}) \in \typingcert$ is required and either
$(\node, \mathsf{S})$ or $(\node, \Neg\mathsf{S})$ is in the already
computed portion of $\typingcert$, we can answer that test right away. If none of
the latter has been computed so far, we have to call
$\mathsf{FloodingValidation}$ with $\typing_0 = \{(\node,
\mathsf{S})\}$. After the call returns, either $(\node, \mathsf{S})$
is in the result typing, then we add it to $\typingcert$, or
$\ULW(\node,\mathsf{S})$ is empty, and then we add $(\node,
\Neg\mathsf{S})$ to $\typingcert$. The function is recursively called
if another test involving $\typingcert$ is required during its
computation.

Another source of complexity is the computation of $\ULW$; this is
also the unique reason for non-tractability of validation. The size of
a the set $\CandULW(\node, \mathsf{S})$ can be exponential in the
number of repetitions of a property in a shape definition (where extra
properties are considered as repetitions). For instance, on the
Running Example, the set $\CandULW(\node_{\textsf{ex:issue1}},
\text{{\sf <IssueShape>}})$ contains 27 ($= 3^3$) candidate mappings
elements. All these have to be checked as potential elements of
$\ULW(\node, \mathsf{S})$. 
% Note that if the modified Interval
% algorithm from \cite{icdt2015} is used, then the latter check is in
% polynomial time.
%
Therefore, decreasing the size of the $\CandULW$ sets is a critical
optimization, and can be obtained by decreasing the size of the
$\potTC(\node)$ sets. For that, we propose to use look-ahead
techniques. Continuing on Example~\ref{ex:local-witness}, the idea is
to remove $\TCons_{\mathsf{C}_3}$ from $\potTC(\edge_4)$ because {\sf
  ex:emin}, the target node of $\edge_4$, does not have a {\sf
  is:experience}, thus cannot satisfy {\sf
  <ProgrammerShape>} required by $\TCons_{\mathsf{C}_3}$. A \emph{look-ahead} for an edge
$\edge = (\node, \mathsf{q}, \node')$ and a triple consumer
$\TCons_{\mathsf{C}_i}$ consists in inspecting \emph{only} the
neighbourhood of $\node'$ trying to prove that $\node'$ \emph{does
  not} satisfy some shape $\mathsf{S}$ that is a conjunct in
$\mathsf{C}_i$, thus allowing to not add $\TCons_{\mathsf{C}_i}$ to
$\potTC(\edge)$. % Two simple criterions for non satisfaction are:
% \begin{itemize}
% \item all nodes that satisfy the shape $\mathsf{S}$ have at least one
%   $\mathsf{prop}$ property, and the $\node'$ does not have such
%   property;
% \item all nodes that satisfy the shape $\mathsf{S}$ have at least one
%   $\mathsf{prop}$ which value is in some set $\mathsf{V}$, and it is
%   not the case for $\node'$.
% \end{itemize}
Look-ahead can be extended to two-look-ahead, three-look-ahead, etc.,
consulting a bit farther in the neighbourhood. As a future work, we
plan to develop static analysis methods for schemas that allow to
define useful look-ahead criterions. Such methods would identify
e.g. the properties that are always required by a shape, to be
looked-up first during look-ahead.

Finally, the $\ULW$ sets do not need to be stored and can be accessed
through an iterator. Every $\CandULW(\node, \mathsf{S})$
is defined as a Cartesian product, so it is easy to iterate on it. On
line \ref{lalgo:flooding-get-first} of
Algorithm~\ref{algo:backtracking}, it is enough to retrieve elements from
$\CandULW(\node, \mathsf{S})$, through its iterator,
until a local witness in $\ULW(\node, \mathsf{S})$ is found.

\subsection{Post-Validation Processing}
The global typing witness computed by the flooding algorithm
associates with every node the shape labels that it satisfies (in
$\typing$), and with every edge in the neighbourhood of a node, how it
participated in satisfying a shape (in $\lw$). This allows for
post-processing of the graph depending on the "roles" played by the
different nodes and edges. For instance, on the Running Example, we
could check the additional constraint "all user that reported an issue
is affected by that same issue". Another interesting use case is exporting
in e.g. XML format all confirmed issues together with the testers that
reproduced them. Exporting in XML and in JSON can be currently
performed by two existing modules\footnote{GenX
  \url{http://w3.org/brief/NDc1}}\footnote{GenJ
  \url{http://w3.org/brief/NDc2}}, implemented using semantic actions
fired after the validation terminates. Additionally, the global typing
witness can be exported using a normalized JSON
format\footnote{\url{http://shexspec.github.io/primer/ShExJ\#validation}},
thus making post-processing possible using virtually any programming
language.

%%% local Variables: 
%%% mode: latex
%%% TeX-master: "paper"
%%% End: 

\section{On error identification}
\label{sec:errors}
One of the uses of error identification
is to guide the user in rendering the input graph
into one that is valid i.e., repairing the graph.%  Consequently, the
% problem of error identification is very closely related to the problem
% of repairing as illustrated in the following example.
\begin{example}
  \label{ex:repairing}
  Take the following RDF graph and the schema from the Running Example. 
  \begin{code}
    \begin{tabbing}
      \hspace{0.6cm} \= \hspace{0.4cm} \=\\
      % <IssueShape> \{\\
      % \> is:reportedBy @<UserShape> , \\
      % \> is:reproducedBy @<TesterShape> + \}\\
      % <UserShape> \{ \\
      % \> foaf:name xsd:string , foaf:mbox xsd:string \} \\
      % <TesterShape> \{ \\
      % \> foaf:name xsd:string , foaf:mbox xsd:string + \}\\
      % \\
      ex:issue is:reportedBy ex:emma ; \\
      \> is:reproducedBy ex:ron, ex:leila . \\
      ex:emma foaf:name "Emma" ; is:experience is:senior . \\
      ex:ron foaf:name "Ron" ; is:role is:someRole . \\
      ex:leila foaf:name "Leila" ; is:experience is:junior ; \\
      \> is:clientNumber 3 ; is:affectedBy ex:issue .
      % ex:issue is:reportedBy ex:emma , is:reproducedBy ex:ron . \\
      % ex:emma foaf:name "Emma" ; \\
      % \>foaf:mbox "emma@shapes.org" , "emma@abc.com".\\
      % ex:ron foaf:name "Ron" ; foaf:mbox "ron@shapes.org"
    \end{tabbing}
  \end{code}
  The node {\sf ex:issue} does not satisfy {\sf <IssueShape>} because
  it does not have a property {\sf is:reportedBy} whose object is a client. 
% the node \textsf{ex:emma has two properties
%     \textsf{foaf:mbox} and consequently does not satisfy
%     \textsf{<UserShape>}. 
  One can identify a number of possible scenarios explaining the
  invalidity of the RDF graph. One is that \textsf{ex:emma} is missing
  a {\sf is:clientNumber} property and such should be added. Another
  is that the triple {\sf ex:issue is:re\-por\-tedBy ex:emma} uses the
  wrong property and should be replaced by \textsf{ex:issue
    is:reproducedBy ex:emma}. Naturally, this would make
  \textsf{ex:issue} not having the required \textsf{is:re\-por\-tedBy}
  property. To satisfy this requirement two actions are possible:
  replacing \textsf{ex:issue is:reproducedBy ex:leila} by
  \textsf{ex:issue is:reportedBy ex:leila} since \textsf{ex:leila}
  satisfies \textsf{<UserShape>} and {\sf <ClientShape>}, or adding a
  new node satisfying \textsf{<UserShape>} and \textsf{<ClientShape>},
  and connecting \textsf{ex:issue} to the new node with
  \textsf{is:reportedBy}.\qed
\end{example}

% We show the inherent combinatorial and computational complexity of the
% repairing problem: the number of different ways of repairing a graph
% may be (exponentially) large which renders constructing a repair
% intractable.

Given a schema $\mathsf{Sch}$, a graph $\mathsf{G}$, and an initial
typing $\typing_0$ of $\mathsf{G}$ w.r.t.\ $\mathsf{Sch}$, we attempt
to find a graph $\mathsf{G}'$ obtained from $\mathsf{G}$ with a
minimal set of triple insertions and deletions such that $\mathsf{G}'$
satisfies $\mathsf{S}$ w.r.t.\ $\typing_0$ (replacing the property of
an edge consists of deleting and inserting an edge). Basically, we
wish to present the user a minimal set of operations that render the
input graph valid thus pinpointing possible reasons why $\mathsf{G}$
is not valid. Such a graph $\mathsf{G'}$ is called a \emph{repair} of
$\mathsf{G}$ w.r.t. $\mathsf{Sch}$. Unfortunately, the number of
different ways of repairing a graph may be (exponentially) large which
renders constructing a repair intractable.

% We fix a schema $\mathsf{S}$, a graph $\mathsf{G}$, and an initial
% typing $\typing_0$ of $\mathsf{G}$ w.r.t.\ $\mathsf{S}$. We shall
% attempt to find a graph $\mathsf{G}'$ obtained from $\mathsf{G}$ with
% a \emph{minimal} set of triple insertion and deletions such that
% $\mathsf{G}'$ satisfies $\mathsf{S}$ w.r.t.\ $\typing_0$ (replacing
% the property of an edge consists of deleting and inserting an
% edge). Basically, we wish to present the user a minimal set of
% operations that render the input graph valid thus pinpointing possible
% reasons why $\mathsf{G}$ is not valid. Naturally, if $\mathsf{G}$ is
% valid, the set of operations is empty since no action needs to be
% taken. Formally, the \emph{change} between two RDF graphs $\mathsf{G}$
% and $\mathsf{G}'$ is
% \begin{align*}
%   \mathsf{Diff}(\mathsf{G},\mathsf{G}') ={} &
%   \{\Del(\edge) \mid \edge\in\Edges(\mathsf{G})\minus\Edges(\mathsf{G}')\}\cup{}\\
%   & \{\Ins(\edge) \mid \edge\in\Edges(\mathsf{G}')\minus\Edges(\mathsf{G})\}.
% \end{align*}
% Now, $\mathsf{G}'$ is a \emph{repair} of $\mathsf{G}$ w.r.t.\
% $\mathsf{S}$ and $\typing_0$ if and only if 1) $\mathsf{G}'$ satisfies
% $\mathsf{S}$ with the initial typing $\typing_0$ and 2) there is no
% $\mathsf{G}''$ that satisfies $\mathsf{S}$ with the initial typing
% $\typing_0$ and
% $\mathsf{Diff}(\mathsf{G},\mathsf{G}'')\subsetneq\mathsf{Diff}(\mathsf{G},\mathsf{G}')$.
% The following example illustrates the high combinatorial complexity of
% repairing: an invalid instance might have an exponential number of
% repairs.
\begin{example}
  \label{ex:repair-example}
  Take the following instance of RDF  
  \begin{code}
    \begin{tabbing}
      \hspace{0.6cm} \= \hspace{0.4cm} \=\\
      ex:term ex:has-vars ex:vars .\\
      ex:vars ex:x1-t "x1-true" ; ex:x1-f "x1-false" ; \\
      \> ex:x2-t "x2-true" ; ex:x2-f "x2-false" . 
    \end{tabbing}
  \end{code}
  and consider the setting where we wish to verify if the node
  \textsf{ex:term} satisfies the type \textsf{<Term>} of the following
  schema 
  \begin{code}
    \begin{tabbing}
      \hspace{0.6cm} \= \hspace{0.4cm} \=\\
      <Term> \{ ex:has-vars ex:vars \}\\
      <Vars> \{\\
      \> (ex:x1-t xsd:string | ex:x1-f xsd:string | \EmptyShape), \\
      \> (ex:x2-t xsd:string | ex:x2-f xsd:string | \EmptyShape) \}
    \end{tabbing}
  \end{code}
  The repairs of the above RDF instance correspond to the
  set of all valuations of two Boolean variables \textsf{x1} and
  \textsf{x2}, which is extensible to an arbitrary
  number of variables. Although the schema also permits an \emph{empty
    valuation} instance where all the outgoing edges of
  \textsf{ex:vars} are removed, such an instance is not a repair
  because it is not minimal. \qed
\end{example}
With additional shape definitions in the
schema and additional nodes in the RDF graph, one can encode
satisfiability of CNF formulas. % Repairing of such an instance
% corresponds to finding a satisfying valuation and the empty valuation
% instance is a minimal repair if and only if no satisfying valuation
% exists. This shows the inherent high computational complexity of
% repairing.
According to Theorem~\ref{thm:1} (given here without proof), 
constructing a repair is unlikely to be polynomial (unless
P=coNP).
\begin{theorem}
  \label{thm:1}
  Checking if a given graph $\mathsf{G}'$ is a repair of a given graph
  $\mathsf{G}$ w.r.t.\ a given schema $\mathsf{S}$ and a given initial
  typing $\typing_0$ is coNP-complete.
\end{theorem}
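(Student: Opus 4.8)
The plan is to prove coNP-completeness in two parts, membership and hardness; I would obtain hardness by reducing from \textsc{unsat}, the coNP-complete problem of deciding whether a CNF formula has no satisfying assignment.

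\textbf{Membership in coNP.} Recall from Section~\ref{sec:validation} that a graph is valid exactly when it admits a global typing witness extending $\typing_0$, and that whether a \emph{given} $\typing,\lw$ is such a witness can be checked in polynomial time. A repair is a valid graph whose edit distance $|\Del|+|\Ins|$ to $\mathsf{G}$ is minimum among valid graphs, so the classifying condition is minimality, and minimality is universal: $\mathsf{G}'$ is minimal iff \emph{for all} graphs $\mathsf{G}''$ and all candidate witnesses $\typing,\lw$, whenever $\typing,\lw$ is a global typing witness for $\mathsf{G}''$ extending $\typing_0$ we have $|\mathsf{G}\triangle\mathsf{G}''|\ge|\mathsf{G}\triangle\mathsf{G}'|$. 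As the inner predicate is polynomial, this puts minimality in coNP; dually, a short certificate that $\mathsf{G}'$ is \emph{not} a repair is a strictly closer valid graph $\mathsf{G}''$ together with a witness for it. Two points need care. First, I would prove a small-model lemma: any valid graph closer to $\mathsf{G}$ than $\mathsf{G}'$ can be assumed to use only the IRIs, literals and nodes occurring in $\mathsf{G}$, $\mathsf{G}'$ and $\mathsf{Sch}$, plus at most $|\mathsf{G}\triangle\mathsf{G}'|$ fresh nodes, so that $\mathsf{G}''$ has polynomial size and is guessable. Second, a repair must itself be valid, and validity is only in NP; this is absorbed by treating validity of $\mathsf{G}'$ as the definitional precondition of being a repair (one does not call an invalid graph a repair), so that the genuinely hard, classifying condition remaining is minimality and the problem lands in coNP.

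\textbf{coNP-hardness.} I would extend the gadget of Example~\ref{ex:repair-example}. Given a CNF formula $\varphi$ with variables $x_1,\dots,x_n$, build $\mathsf{G}$ with a node carrying, for every $i$, \emph{both} edges \textsf{ex:x$_i$-t} and \textsf{ex:x$_i$-f}, and give {\sf <Vars>}, for each $i$, the same three-way choice between \textsf{ex:x$_i$-t}, \textsf{ex:x$_i$-f} and $\EmptyShape$ as in the example. As argued there, validity forces deleting at least one of the two edges per variable, so every valid graph costs at least $n$, and the cost-$n$ valid configurations are in bijection with truth assignments (keep the $t$-edge for true, the $f$-edge for false). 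On top of this I add \emph{clause gadgets} — extra nodes and shape definitions reusing the NP-hardness encoding of ShEx validation from \cite{icdt2015} — so that a cost-$n$ assignment graph is valid iff the encoded assignment satisfies every clause of $\varphi$. Finally I install an \emph{escape hatch}: a single distinguished triple whose one-edit insertion makes all clause gadgets vacuously satisfiable (via a top-level choice or an {\sf EXTRA}/$\Neg{}$ construct that short-circuits clause checking), so that a valid graph of cost exactly $n+1$ exists \emph{regardless} of $\varphi$. Let $\mathsf{G}'$ be one such escape-hatch graph, a fixed assignment together with the override edge; it is valid and has cost $n+1$, both verifiable in polynomial time. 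Then the minimum valid cost is $n$ when $\varphi$ is satisfiable (a satisfying assignment needs no override) and $n+1$ when $\varphi$ is unsatisfiable (no assignment passes clause checking, so the override is forced while the $n$ deletions remain mandatory). Hence $\mathsf{G}'$ is a repair iff the minimum valid cost is $n+1$ iff $\varphi$ is unsatisfiable, giving coNP-hardness.

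\textbf{Main obstacle.} The delicate part is the cost bookkeeping in the hardness construction: I must guarantee that the variable, clause and override gadgets never interact to produce an \emph{unintended} cheaper valid graph — for instance one exploiting {\sf EXTRA} triples, fresh nodes, or a partial assignment — that would undercut the threshold and break the equivalence. Making the gap tight (exactly $n$ versus $n+1$) and proving that no alternative repair route exists is where most of the effort lies; $\CLOSED$ together with careful use of the $\Neg{}$ and {\sf EXTRA} machinery are the tools to seal off such shortcuts. On the membership side, the analogous subtlety is the small-model lemma bounding fresh nodes and values, without which the guessed $\mathsf{G}''$ is not obviously of polynomial size.
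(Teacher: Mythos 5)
The paper gives no proof of Theorem~\ref{thm:1} --- it is explicitly stated ``without proof,'' accompanied only by Example~\ref{ex:repair-example} and the remark that the hardness argument needs only the some-of and grouping operators --- so your proposal can only be judged on its own merits. Your overall architecture (coNP upper bound via a small-model argument plus a universally quantified minimality condition; hardness from \textsc{unsat} via a cost threshold $n$ versus $n+1$ with an escape hatch, built on top of the valuation gadget of Example~\ref{ex:repair-example}) is a sensible and standard plan for repair-checking problems, and the variable gadget and the $n$-vs-$(n{+}1)$ accounting are in the spirit of what the paper sketches.

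There are, however, two genuine gaps. First, the membership argument does not establish coNP membership for the problem \emph{as stated}. Being a repair is the conjunction of ``$\mathsf{G}'$ is valid'' (an NP condition, since validation is NP-complete by \cite{icdt2015}) and ``no strictly closer graph is valid'' (a coNP condition given your small-model lemma); this conjunction naturally sits in DP, not coNP. Your fix --- declaring validity of $\mathsf{G}'$ a ``definitional precondition'' --- silently converts the decision problem into a promise problem, which is a different statement from the theorem; you would need either to argue that validity of the given $\mathsf{G}'$ can be folded into the universal quantifier, or to restrict to a fragment where validity is polynomial, and you do neither. Second, the hardness construction is a plan whose load-bearing components are all deferred: the clause gadgets, the escape hatch, and above all the proof that no unintended cheaper valid graph exists (e.g.\ redirecting \textsf{ex:has-vars} to a fresh node whose empty neighbourhood satisfies \textsf{<Vars>} vacuously --- a two-edit repair that undercuts the $n$-deletion bound unless $\typing_0$ pins \textsf{ex:vars} to \textsf{<Vars>}, a point you never make explicit). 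You candidly list this as the ``main obstacle,'' which is accurate, but it means the reduction is not yet a proof. Note also that the paper asserts the hardness proof uses \emph{only} some-of and grouping, whereas your escape hatch and your sealing-off strategy invoke $\CLOSED$, {\sf EXTRA} and $\Neg{}$; this suggests the intended construction is structured differently (plausibly encoding clause checking entirely with groups of some-of shapes reached through shape references fixed by $\typing_0$), and that the extra machinery you reach for is a symptom of the gadgetry not yet being pinned down.
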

% Naturally, an algorithm constructing a repair would be capable of
% solving the satisfiability problem and hence it is unlikely to be
% polynomial (unless $P=coNP$). 
%
Remark that the hardness proof uses only simple schema operators: the
some-of operator for encoding disjunction, and the grouping operator
for encoding conjunction. Thus, any schema language proposing these or
similar operators would have non-tractable repair problem. 

Error identification could be approached in a much simpler fashion:
rather than repairing the problem just point to the node(s)
responsible for the problem is and let the user deal with it. This
approach is, however, inherently ambiguous as already shown in
Example~\ref{ex:repairing}, and %  : the failure of \textsf{ex:issue} to
% satisfy \textsf{<IssueShape>} may be due to incorrectly used
% \textsf{is:reportedBy} and \textsf{is:reproducedBy} properties at the
% node \textsf{ex:issue} or because \textsf{ex:emma} has too many
% \textsf{foaf:mbox} properties. It is easy to imagine that such chains
% of possible culprits might be arbitrarily long and complex. 
shows the necessity in developing suitable heuristics for error
identification.

\section{Related work}
\label{sec:related}

\paragraph{Recursive Validation Language and Cyclic Validation}
\label{sec:recurse}
In ShEx, a $\ShapeDefinition$ has a $\ShapeExpr$ composed of $\TripleConstraint$s with $\ValueClass$es.
In the case that a $\ValueClass$ contains a $\ShapeLabel$, the grammar becomes recursive because that $\ShapeLabel$ references a $\ShapeDefinition$.
For example, an {\sf <IssueShape>} could have an {\sf is:related} property which references another {\sf <IssueShape>}.
If the instance graph has cycles on edges which appear in shapes, validation may arrive at validating the same shape against the same node.
The schema languages Description Set Profiles and Resource Shapes described below haven't considered the problem of detecting or terminating cyclic validation.

\paragraph{Global Constraints}
\label{sec:glob}
Where ShEx focuses on typings of specific instance nodes by shapes, some schema languages define validation for RDF graphs as a whole.
This involves some variant of an iteration across nodes in the graph and shapes in the schema to perform a maximal typing.
Description Set profiles includes global cardinality constraints describing the number of permissible instances of specified shapes.
 ShEx's global maximal typing\cite{icdt2015} types all nodes with all the shapes they satisfy.

\paragraph{Description Set Profiles}
\label{sec:dsp}
The Dublin Core® Metadata Initiative is developing a constraint language called Description Set Profiles (DSP) \cite{dublin-core}. 
DSP can be representated by an RDF vocabulary and by a conventional XML Schema.
DSP  has additional value constraints for encoding, language tag lists, and specific rules for subproperties.

DSP does not address repeated properties though a 2008 evaluation  of DSP for a ``Scholarly Works Application Profile'' specifically identified a need for repeated properties e.g. dc:type\footnote{%
\url{http://tinyurl.com/eprint-dc-type1}}
\footnote{\url{http://tinyurl.com/eprint-dc-type2}}.
DSP's current interpretation of repeated properties treats them as conjunctions of constraints.
The study found that professional modelers had expected a behavior more like ShEx, i.e. each of the constraints would have to be individually matched by some triples in the neighborhood.

\paragraph{Resource Shapes}
\label{sec:rs}
ShEx was originally created to provide a domain-specific language for Resource Shapes \cite{resource-shapes}.
Resource Shapes is an RDF vocabulary for describing simple conjunctions of shape constraints.
While the specification was not clear on this, the author verbally indicated that repeated properties were probably not permitted.
Resource Shapes includes descriptive features {\sf oslc:readOnly}, {\sf oslc:hidden} and {\sf oslc:name} which have no effect on validation but can be useful to tools generating user input forms.

\paragraph{OWL Based Validation}
Another approach proposed for RDF validation was to use OWL to express constraints.  
However, the use of Open World and Non-unique name assumption limits validation possibilities. \cite{ClarkSirin13,Tiao10,Motik07} propose the use of OWL expressions with a Closed World Assumption to express integrity constraints. 
The main criticism against such an approach is that it associates an alternative semantics with the existing OWL syntax, which can be misleading for users.
Note that in any case, OWL inference engines cannot be used for checking the constraints, and such an approach requires a dedicated implementation.

\paragraph{SPARQL Based Validation}  
It is possible to use SPARQL to express validation constraints although the 
SPARQL queries can be long and difficult to debug so there is a need for a higher-level language.
SPARQL Inferencing Notation (SPIN)\cite{SPIN11} constraints associate {\sc RDF} types or nodes with validation rules. These rules are expressed as SPARQL queries. 
There have been other proposals using SPARQL combined with other technologies,
Simister and Brickley\cite{Simister13} propose a combination
 between SPARQL queries and property paths employed at Google. 
 Kontokostas et al~\cite{kontokostasDatabugger} proposed \emph{RDFUnit} a Test-driven framework which employs SPARQL query templates and Fischer et al~\cite{Fischer2015} propose RDF Data Descriptions, a domain-specific language that is compiled into SPARQL queries.
SPARQL has much more expressiveness than Shape Expressions and can even be used to validate numerical and statistical computations~\cite{Labra13}. On the other hand, SPARQL does not allow to support recursive constraints and the additive semantics of grouping is difficult to express in SPARQL. Reuter et al~\cite{Reuter2015} have recently proposed an extension operator to SPARQL to include recursion. With such operator, it might be possible to compile ShEx to SPARQL.

\paragraph{SHACL}
\label{sec:shacl}
The SHACL language is under development and the Working Group has several open issues related to it's differences with ShEx, most notably Issue 92\footnote{{\url{https://www.w3.org/2014/data-shapes/track/issues/92}}} related to the interpretation of repeated properties.
While still far from representing consensus in the group, the First Public Working Draft \cite{shacl} includes a core RDF vocabulary similar to but more expressive than Resoruce Shapes for describing shapes constraints.
The other part of the specification includes a SPARQL template convention and an algorithm for iterating through a graph and its constraints.
This template system implements the core semantics and, in principle, provides an extensibility mechanism to extend the vocabulary to features which can be verified in atomic SPARQL queries with a supplied subject.
Regarding recursion, it is hoped that SHACL will adopt some definition of well-defined schema which will enable sound recursion.
To the extend they are defined, SHACL's AND and OR constructs are analogous to ShEx's some-of and group (so long as there are no triples in the instance data which could match more than one repeated property).
The property constraints attached directly to a SHACL shape appear to have the same behavior as if they were inside an AND construct.
There was no schema for SHACL (a so-called SHACL for SHACL) published with the first published working draft.
In principle, a sufficiently constraining schema would accept only inputs for which there was a defined semantics.
This could provide an anchor for semantic definitions analogous to the role typically performed by an abstract syntax.

\section{Conclusions and future work}
\label{sec:concl}
ShEx is an expressive schema language for RDF graphs. We illustrated
the features of the language with examples, described its semantics,
and presented a validation algorithm.
ShEx has several open source implementations, and several
documentation resources available on the Web. It has been used for the
description and the validation of two linked data portals
\cite{Labra-ldqsemantics}. ShEx is currently successfully used in
medical informatics for describing clinical models.
ShEx represents a substantial improvement over contemporary sche\-ma languages in features and sound semantics.

As future development of ShEx, we are working on the definition of
high-level logical constraints on top of ShEx schemas, on data
exchange and data transformation solutions based on ShEx, and on
heuristics for helpful error reporting.

\end{document}